  \theoremstyle{plain}
  \newtheorem{theorem}{Theorem}
  \newtheorem{lemma}[theorem]{Lemma}  
  \newtheorem{proposition}[theorem]{Proposition}  
  \newtheorem{observation}[theorem]{Observation}
  \theoremstyle{definition}
  \newtheorem{example}[theorem]{Example}
  \newtheorem*{claim}{Claim}
\newcommand{\defproblem}[3]{
  \vspace{3mm}
  \noindent\fbox{
  \begin{minipage}{0.96\textwidth}
  \textsc{#1}

  \smallskip
  \noindent
  {\bf{Input:}} #2
  
  \smallskip
  \noindent
  {\bf{Output:}} #3
  \end{minipage}
  }
  \vspace{3mm}
}
  \newcommand{\floor}[1]{\left\lfloor #1 \right\rfloor}
  \newcommand{\Oh}{\mathcal{O}}
  \newcommand{\Ohstar}{\Oh^*}
  \newcommand{\dd}{\mathinner{.\,.}}
  \newcommand{\WSCSFull}{\textsc{Weighted Shortest Common Supersequence}\xspace}
  \newcommand{\WLCSFull}{\textsc{Weighted Longest Common Subsequence}\xspace}
  \newcommand{\SCSFull}{\textsc{Shortest Common Supersequence}\xspace}
  \newcommand{\LCSFull}{\textsc{Longest Common Subsequence}\xspace}
  \newcommand{\WSCS}{\textsc{WSCS}\xspace}
  \newcommand{\Knapsack}{\textsc{Knapsack}\xspace}
  \newcommand{\HS}{\textsc{Merge}\xspace}
  \newcommand{\WLCS}{\textsc{WLCS}\xspace}
  \newcommand{\SCS}{\textsc{SCS}\xspace}
  \newcommand{\LCS}{\textsc{LCS}\xspace}
  \newcommand{\WC}{\textsc{Weighted Consensus}\xspace}
  \newcommand{\SubsetSum}{\textsc{Subset Sum}\xspace}
  \newcommand{\DP}{\mathbf{DP}}
  \newcommand{\DPR}{\overrightarrow{\mathbf{DP}}}
  \newcommand{\DPL}{\overleftarrow{\mathbf{DP}}}
  \newcommand{\LR}{\overrightarrow{L}}
  \newcommand{\LL}{\overleftarrow{L}}
  \newcommand{\TT}{\overleftarrow{T}}
  \newcommand{\Freq}{\mathit{Freq}}
  \newcommand{\RFreq}{\overrightarrow{\mathit{Freq}}}
  \newcommand{\LFreq}{\overleftarrow{\mathit{Freq}}}
  \renewcommand{\P}{\mathcal{P}}
  \renewcommand{\H}{\mathcal{H}}
  \newcommand{\fr}{\ensuremath{\tfrac1z}\xspace}
  \newcommand{\match}[1]{\approx_{#1}}
  \newcommand{\mm}[1]{\approx_{#1}}
  \newcommand{\subseq}[1]{\subseteq_{#1}}
  \newcommand{\R}[1]{\subseteq_{#1}}
  \newcommand{\Match}{\mathsf{Matched}}
\title{Weighted Shortest Common Supersequence Problem Revisited}
\newcommand*\samethanks[1][\value{footnote}]{\footnotemark[#1]}
\author[1]{Panagiotis Charalampopoulos}
\author[2,3]{Tomasz Kociumaka\thanks{Supported by ISF grants no. 824/17 and 1278/16 and by an ERC grant MPM under the EU's Horizon 2020 Research and Innovation Programme (grant no. 683064).}}
\author[4]{Solon P. Pissis}
\author[3]{Jakub Radoszewski\thanks{Supported by the ``Algorithms for text processing with errors and uncertainties'' project carried out within the HOMING program of the Foundation for Polish Science co-financed by the European Union under the European Regional Development Fund.}}
\author[3]{Wojciech Rytter}
\author[3]{Juliusz Straszyński\samethanks[2]}
\author[3]{Tomasz Waleń}
\author[3]{Wiktor Zuba}
\affil[1]{Department of Informatics, King's College London, London, UK\\
    \texttt{panagiotis.charalampopoulos@kcl.ac.uk}}
\affil[2]{Department of Computer Science, Bar-Ilan University, Ramat Gan, Israel}
\affil[3]{Institute of Informatics, University of Warsaw, Warsaw, Poland\\
    \texttt{$\{$kociumaka,jrad,rytter,jks,walen,w.zuba$\}$@mimuw.edu.pl}}
\affil[4]{CWI, Amsterdam, The Netherlands\\
    \texttt{solon.pissis@cwi.nl}}
\date{\vspace{-5ex}}
\begin{document}

\maketitle

\begin{abstract}
A weighted string, also known as a position weight matrix, is a sequence of probability distributions over some alphabet.
We revisit the Weighted Shortest Common Supersequence (WSCS) problem, introduced by Amir et al.\ [SPIRE 2011], that is, the SCS problem on weighted strings. 
In the WSCS problem, we are given two weighted strings $W_1$ and $W_2$ and a threshold $\fr$ on probability,
and we are asked to compute the shortest (standard) string $S$ such that both $W_1$ and $W_2$ match subsequences of $S$
(not necessarily the same) with probability at least $\fr$.
Amir et al.\ showed that this problem is NP-complete if the probabilities, including the threshold $\fr$,
are represented by their logarithms (encoded in binary).

We present an algorithm that solves the WSCS problem for two weighted strings of length $n$
over a constant-sized alphabet in $\Oh(n^2\sqrt{z} \log{z})$ time.
Notably, our upper bound matches known conditional lower bounds stating that
the WSCS problem cannot be solved in $\Oh(n^{2-\varepsilon})$ time or in $\Ohstar(z^{0.5-\varepsilon})$ time%
\footnote{The $\Ohstar$ notation suppresses factors polynomial with respect to the instance size (with numeric values encoded in binary).}
unless there is a breakthrough improving upon long-standing upper bounds for fundamental NP-hard problems (\textsc{CNF-SAT} and \SubsetSum, respectively).

We also discover a fundamental difference between the WSCS problem and the
Weighted Longest Common Subsequence (WLCS) problem, introduced by Amir et al.\ [JDA 2010].
We show that the WLCS problem cannot be solved in $\Oh(n^{f(z)})$ time, for any function $f(z)$, unless $\mathrm{P}=\mathrm{NP}$.
\end{abstract}
\newpage

\section{Introduction}

Consider two strings $X$ and $Y$. 
A common supersequence of $X$ and $Y$ is a string $S$ such that $X$ and $Y$ are both subsequences of $S$. 
A shortest common supersequence (\SCS) of $X$ and $Y$ is a common supersequence of $X$ and $Y$ of minimum length.
The \SCSFull problem (the \SCS problem, in short) is to compute an \SCS of $X$ and $Y$.
The \SCS problem is a classic problem in theoretical computer science~\cite{DBLP:journals/jacm/Maier78,DBLP:journals/tcs/RaihaU81,DBLP:journals/siamcomp/JiangL95}. 
It is solvable in quadratic time using a standard dynamic-programming approach~\cite{DBLP:books/daglib/0023376},
which also allows computing a shortest common supersequence of any constant number of strings (rather than just two) in polynomial time.
In case of an arbitrary number of input strings, the problem becomes NP-hard~\cite{DBLP:journals/jacm/Maier78} 
even when the strings are binary~\cite{DBLP:journals/tcs/RaihaU81}.

A weighted string of length $n$ over some alphabet $\Sigma$ is a type of uncertain sequence. 
The uncertainty at any position of the sequence is modeled using a subset of the alphabet (instead of a single letter),
with every element of this subset being associated with an occurrence probability;
the probabilities are often represented in an $n \times |\Sigma|$ matrix.
These kinds of data are common in various applications where: (i) imprecise data measurements are recorded; (ii) flexible sequence modeling, such as binding profiles of molecular sequences, 
is required; (iii) observations are private and thus sequences of observations may have artificial uncertainty introduced deliberately~\cite{DBLP:journals/tkde/AggarwalY09}.
For instance, in computational biology they are known as position weight matrices or position probability matrices~\cite{Stormo_1982}. 

In this paper, we study the \WSCSFull problem (the \WSCS problem, in short) introduced by Amir et al.~\cite{DBLP:conf/spire/AmirGS11}, which is a generalization of the \SCS problem for weighted strings. 
In the \WSCS problem, we are given two weighted strings $W_1$ and $W_2$ and a probability threshold $\fr$,
and the task is to compute the shortest (standard) string such that both $W_1$ and $W_2$ match subsequences of $S$
(not necessarily the same) with probability at least $\fr$. 
In this work, we show the first efficient algorithm for the \WSCS problem.

A related problem is the \WLCSFull problem (the \WLCS problem, in short). 
It was introduced by Amir et al.~\cite{DBLP:journals/jda/AmirGS10} and further studied 
in~\cite{DBLP:journals/dam/CyganKRRW16} and, very recently, in~\cite{DBLP:journals/corr/abs-1901-04068}.
In the \WLCS problem, we are also given two weighted strings $W_1$ and $W_2$ and a threshold $\fr$ on probability,
but the task is to compute the longest (standard) string $S$ such that
$S$ matches a subsequence of $W_1$ with probability at least $\fr$
and
$S$ matches a subsequence of $W_2$ with probability at least $\fr$.
For standard strings $S_1$ and $S_2$,
the length of their shortest common supersequence $|\SCS(S_1,S_2)|$
and
the length of their longest common subsequence $|\LCS(S_1,S_2)|$ 
satisfy the following folklore relation:
\begin{equation}\label{eq:LCS_SCS}
  |\LCS(S_1,S_2)| + |\SCS(S_1,S_2)| = |S_1| + |S_2|.
\end{equation}
However, an analogous relation does not connect the \WLCS and \WSCS problems, even though
both problems are NP-complete because of similar reductions, which remain valid even in the case that both weighted strings have the same length~\cite{DBLP:journals/jda/AmirGS10,DBLP:conf/spire/AmirGS11}.
In this work, we discover an important difference between the two problems.

Kociumaka et al.~\cite{DBLP:journals/mst/KociumakaPR19} introduced a problem called \WC,
which is a special case of the \WSCS problem asking whether the \WSCS of two weighted strings of length $n$ is of length $n$,
and they showed that the \WC problem is NP-complete yet
admits an algorithm running in pseudo-polynomial time $\Oh(n+\sqrt{z}\log z)$ for constant-sized alphabets\footnote{Note that in general $z\notin \Ohstar(1)$ unless $z$ is encoded in unary.}.
Furthermore, it was shown in \cite{DBLP:journals/mst/KociumakaPR19} that the \WC problem cannot be solved in $\Ohstar(z^{0.5-\varepsilon})$ time
for any $\varepsilon>0$ unless there is an $\Ohstar(2^{(0.5-\varepsilon)n})$-time algorithm for the \SubsetSum problem.
Let us recall that the \SubsetSum problem, for a set of $n$ integers, asks whether there is a subset summing up to a given integer.
Moreover, the $\Ohstar(2^{n/2})$ running time for the \SubsetSum problem, achieved by a classic meet-in-the-middle
approach of Horowitz and Sahni~\cite{DBLP:journals/jacm/HorowitzS74}, has not been improved yet despite much effort; see e.g.~\cite{DBLP:journals/siamcomp/BansalGN018}.

Abboud et al.~\cite{DBLP:conf/focs/AbboudBW15} showed that the \LCSFull problem over constant-sized alphabets cannot be solved in $\Oh(n^{2-\varepsilon})$
time for $\varepsilon>0$ unless the Strong Exponential Time Hypothesis
\cite{DBLP:journals/jcss/ImpagliazzoP01,DBLP:journals/jcss/ImpagliazzoPZ01,DBLP:journals/eatcs/LokshtanovMS11} fails.
By~\eqref{eq:LCS_SCS}, the same conditional lower bound applies to the \SCS problem, and since standard strings are a special
case of weighted strings (having one letter occurring with probability equal to 1 at each position), it also applies to the \WSCS problem.

The following theorem summarizes the above conditional lower bounds on the \WSCS problem.

\begin{theorem}[Conditional hardness of the \WSCS problem; see \cite{DBLP:conf/focs/AbboudBW15,DBLP:journals/mst/KociumakaPR19}]\label{the:CLB}
Even in the case of constant-sized alphabets, the \WSCSFull problem is NP-complete, and for any $\varepsilon>0$ it cannot be solved:
\begin{enumerate}
\item in $\Oh(n^{2-\varepsilon})$ time unless the Strong Exponential Time Hypothesis fails;
\item in $\Ohstar(z^{0.5-\varepsilon})$ time unless there is an
$\Ohstar(2^{(0.5-\varepsilon)n})$-time algorithm for the \SubsetSum problem.
\end{enumerate}
\end{theorem}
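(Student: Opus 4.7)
The plan is to derive each part of the theorem by reducing from a problem whose hardness is already established in the cited works, exploiting two basic observations: (i) a standard string is a weighted string in which every position assigns probability $1$ to a single letter, so the \SCS problem is a special case of \WSCS with $z = 1$; and (ii) the \WC problem is by definition the restriction of \WSCS to instances whose optimum has length exactly $n$, and is therefore also a special case.

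For the NP-completeness claim, I would simply invoke the NP-completeness of \WC from Kociumaka et al.~\cite{DBLP:journals/mst/KociumakaPR19} (or, alternatively, of \WSCS from Amir et al.~\cite{DBLP:conf/spire/AmirGS11}), both of which are already established over a constant-sized (in fact, binary) alphabet. Observation (ii) then yields NP-completeness of \WSCS by a trivial reduction: any \WC instance is passed unchanged, and a decision version of \WSCS with threshold length $n$ answers the \WC question.

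For part~1, I would start from the Abboud--Backurs--Williams result~\cite{DBLP:conf/focs/AbboudBW15} that \LCS over binary strings admits no $\Oh(n^{2-\varepsilon})$-time algorithm under SETH. Via the folklore identity~\eqref{eq:LCS_SCS}, an $\Oh(n^{2-\varepsilon})$-time algorithm for \SCS on two strings of length $n$ would immediately yield one for \LCS (since $|\LCS(S_1,S_2)| = |S_1|+|S_2|-|\SCS(S_1,S_2)|$), contradicting SETH. Observation (i) above then lifts this lower bound to \WSCS; note that the weighted-instance size parameter satisfies $z=1$ in the reduced instance, so the bound is not absorbed into a $\mathrm{poly}(z)$ factor.

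For part~2, I would appeal to the reduction from \SubsetSum to \WC given in Kociumaka et al.~\cite{DBLP:journals/mst/KociumakaPR19}, which proves that an $\Ohstar(z^{0.5-\varepsilon})$-time algorithm for \WC over a constant-sized alphabet would yield an $\Ohstar(2^{(0.5-\varepsilon)n})$-time algorithm for \SubsetSum. By observation (ii), any such algorithm for \WSCS would in particular solve \WC within the same bound, so the same conditional lower bound applies. The only step requiring care is that the reduction preserves the $z$ parameter up to polynomial factors absorbed by the $\Ohstar$ notation, but this is already guaranteed by the construction in~\cite{DBLP:journals/mst/KociumakaPR19}. Since all three parts are obtained by direct reductions to results available in the cited papers, there is no significant technical obstacle; the statement is essentially a consolidation of prior work, included here to motivate the matching upper bound proved in the remainder of the paper.
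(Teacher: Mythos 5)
Your proposal is correct and follows essentially the same route as the paper: NP-completeness and the $\Ohstar(z^{0.5-\varepsilon})$ bound are inherited because \WC (and, more generally, standard/weighted instances from prior work) is a special case of \WSCS, and the $\Oh(n^{2-\varepsilon})$ bound is transferred from the \LCS lower bound of Abboud et al.\ via the relation~\eqref{eq:LCS_SCS} and the fact that standard strings are weighted strings with a single letter of probability $1$ per position. The paper treats this theorem exactly as you do---as a consolidation of the cited results---so no further detail is needed.
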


\paragraph{Our Results}
We give an algorithm for the \WSCS problem with pseudo-polynomial running time that depends polynomially on $n$ and $z$. 
Note that such algorithms have already been proposed for several problems on weighted strings: pattern matching~\cite{DBLP:journals/tcs/BartonLP16,DBLP:journals/iandc/Charalampopoulos19,DBLP:journals/mst/KociumakaPR19,DBLP:conf/dcc/RadoszewskiS17}, 
indexing~\cite{DBLP:journals/tcs/AmirCIKZ08,DBLP:conf/cpm/BartonKPR16,zstrings,DBLP:conf/latin/Charalampopoulos18}, 
and finding regularities~\cite{DBLP:journals/algorithmica/BartonP18}.
In contrast, we show that no such algorithm is likely to exist for the \WLCS problem. 

Specifically, we develop an $\Oh(n^2 \sqrt{z} \log{z})$-time algorithm for the \WSCS problem in the case of a constant-sized alphabet\footnote{We consider the case of $|\Sigma|=\Oh(1)$ just for simplicity. For a general alphabet, our algorithm can be modified to work
in $\Oh(n^2 |\Sigma|\sqrt{z} \log{z})$ time.}.
This upper bound matches the conditional lower bounds of Theorem~\ref{the:CLB}.
We then show that unless $P=NP$, the \WLCS problem cannot be solved in $\Oh(n^{f(z)})$ time for any function $f(\cdot)$.

\paragraph{Model of Computations}
We assume the word RAM model with word size $w = \Omega(\log n + \log z)$.
We consider the log-probability representation of weighted sequences, that is, we assume that
the non-zero probabilities in the weighted sequences and the threshold probability \fr are all of the form $c^{\frac{p}{2^{dw}}}$,
where $c$ and $d$ are constants and $p$ is an integer that fits in $\Oh(1)$ machine words.

\section{Preliminaries}
A \emph{weighted string} $W=W[1] \cdots W[n]$ of length $|W|=n$ over alphabet $\Sigma$
is a sequence of sets of the form
\[W[i] = \{(c,\ \pi^{(W)}_i(c))\ :\ c \in \Sigma\}.\]
Here, $\pi_i^{(W)}(c)$ is the occurrence probability of the letter $c$ at the position $i \in [1\dd n]$.%
\footnote{For any two integers $\ell\le r$, we use $[\ell\dd r]$ to denote the integer range $\{\ell,\ldots,r\}$. }
These values are non-negative and sum up to 1 for a given index $i$.

By $W[i \dd j]$ we denote the weighted \emph{substring}  $W[i]\cdots W[j]$;
it is called a prefix if $i=1$ and a suffix if $j=|W|$.

The \emph{probability of matching} of a string $S$ with a weighted string $W$, with $|S|=|W|=n$, is
\[\P(S,W) \, =\, \prod_{i=1}^n \pi^{(W)}_i(S[i])\,=\, \prod_{i=1}^n\, \P(S[i]=W[i]).\]
We say that a (standard) string $S$ \emph{matches a weighted string $W$
with probability at least $\fr$}, denoted by $S \mm{z} W$, if $\P(S,W) \ge \fr$.
We also denote \[\Match_z(W)=\{S \in \Sigma^n : \P(S,W)\ge \fr\}.\]

\smallskip
For a string $S$ we write $W\R{z}S$ if $S'\mm{z}W$ for some 
 subsequence $S'$ of $S$. Similarly we write $S\R{z}W$ if $S\mm{z}W'$ for some
 subsequence $W'$ of $W$.


Our main problem can be stated as follows.

\defproblem{\WSCSFull ($\WSCS(W_1,W_2,z)$)}{
  Weighted strings $W_1$ and $W_2$ of length up to $n$ and a threshold $\fr$.
}{
  A shortest standard string $S$ such that 
$W_1\subseq{z}S$ and $W_2\subseq{z}S$.
}

\begin{example}\label{ex:1}
If the alphabet is $\Sigma=\{\mathtt{a},\mathtt{b}\}$, then we write the weighted string as $W=[p_1,p_2,\ldots,p_n]$, where
$p_i=\pi^{(W)}_i(\mathtt{a})$; in other words, $p_i$ is the probability that the $i$th letter $W[i]$ is $\mathtt{a}$.
For \[W_1=[1,\, 0.2,\, 0.5], \; W_2=[0.2,\, 0.5,\, 1],\text{ and }z=\tfrac{5}{2},\] we have
$\WSCS(W_1,\, W_2,\,z)=\mathtt{baba}$ since 
$W_1\subseq{z} \mathtt{b\underline{a}\underline{b}\underline{a}},\, 
W_2\subseq{z} \mathtt{\underline{b}\underline{a}b\underline{a}}$ (the witness subsequences are underlined), and
$\mathtt{baba}$ is a shortest string with this property.
\end{example}

We first show a simple solution to \WSCS based on the following facts.

\begin{observation}[Amir et al.~\cite{DBLP:journals/tcs/AmirCIKZ08}]\label{obs:obv}
  Every weighted string $W$ matches at most $z$ standard strings with probability at least \fr,
  i.e.,   $|\Match_z(W)| \le z$.
\end{observation}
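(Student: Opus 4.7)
The plan is to use a simple averaging/counting argument based on the fact that the probabilities over all standard strings in $\Sigma^n$ sum to exactly $1$. The key identity I would establish first is
\[
\sum_{S \in \Sigma^n} \P(S,W) \;=\; \sum_{S \in \Sigma^n}\prod_{i=1}^n \pi^{(W)}_i(S[i]) \;=\; \prod_{i=1}^n\sum_{c \in \Sigma}\pi^{(W)}_i(c) \;=\; \prod_{i=1}^n 1 \;=\; 1,
\]
where the middle equality is the standard distributivity argument: expanding the product of sums gives exactly one term $\prod_i \pi^{(W)}_i(S[i])$ for each choice of string $S \in \Sigma^n$.

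Given this, every string $S \in \Match_z(W)$ contributes a summand of value at least $\fr$ to a sum that totals $1$. I would then conclude
\[
|\Match_z(W)| \cdot \fr \;\le\; \sum_{S \in \Match_z(W)} \P(S,W) \;\le\; \sum_{S \in \Sigma^n} \P(S,W) \;=\; 1,
\]
so that $|\Match_z(W)| \le z$, as claimed.

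There is no real obstacle here: the statement follows from the definition of $\P(S,W)$ together with the fact that $\sum_{c\in\Sigma}\pi^{(W)}_i(c)=1$ at every position $i$. The only point worth flagging is the step where the sum of products is rewritten as a product of sums, which is immediate once one views $\P(\cdot,W)$ as the joint distribution of independent letters sampled position by position according to $\pi^{(W)}_i$.
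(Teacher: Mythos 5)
Your argument is correct: since $\sum_{S\in\Sigma^n}\P(S,W)=1$ and each element of $\Match_z(W)$ contributes at least $\fr$ to this sum, the bound $|\Match_z(W)|\le z$ follows immediately. The paper states this observation without proof, citing Amir et al., and your averaging/counting argument is exactly the standard proof of that cited fact, so there is nothing to add or correct.
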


\begin{lemma}\label{lem:mzw}
  The set $\Match_z(W)$ can be computed in $\Oh(nz)$ time if $|\Sigma|=\Oh(1)$.
\end{lemma}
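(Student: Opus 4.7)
The plan is to build $\Match_z(W)$ incrementally, position by position, maintaining the invariant that after processing $W[1\dd i]$ we have stored exactly
\[M_i = \{S \in \Sigma^i : \P(S, W[1\dd i]) \ge \fr\},\]
together with the probability $\P(S, W[1\dd i])$ for each $S \in M_i$. The crucial monotonicity observation is that $\pi_j^{(W)}(c) \le 1$ for every position $j$ and letter $c$, so if $S \in M_{i+1}$ then the prefix $S[1\dd i] \in M_i$. Therefore $M_{i+1}$ can be obtained from $M_i$ by extending every $S \in M_i$ by every letter $c \in \Sigma$, computing the new probability in $\Oh(1)$ as $\P(S, W[1\dd i]) \cdot \pi_{i+1}^{(W)}(c)$, and discarding candidates whose probability falls below $\fr$.

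For the running time, I first invoke \Cref{obs:obv} applied to the prefix $W[1\dd i]$, which gives $|M_i| \le z$ for every $i \in [1\dd n]$. Hence at each of the $n$ steps we generate at most $|\Sigma| \cdot z = \Oh(z)$ candidate extensions (for constant $|\Sigma|$), each processed in $\Oh(1)$ arithmetic operations on $\Oh(1)$-word log-probabilities. This yields $\Oh(nz)$ total work for the probability bookkeeping. The final answer is $M_n = \Match_z(W)$.

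To avoid paying $\Oh(n)$ per extension for string representation, I would store the prefixes implicitly as leaves of a trie built top-down across the iterations: node at depth $i$ represents a prefix in $M_i$ and carries its probability; extending by $c$ creates at most $|\Sigma|$ children, and pruning simply deletes any child whose probability drops below $\fr$. After $n$ iterations, enumerating the root-to-leaf paths of the surviving trie outputs $\Match_z(W)$. The overall time is $\Oh(n \cdot |\Sigma| \cdot z) = \Oh(nz)$ as required.

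The only mildly delicate point is verifying that extensions and threshold tests are $\Oh(1)$ in the log-probability representation declared in the Model of Computations: multiplications become additions of $\Oh(1)$-word exponents, and comparisons with $\fr$ become comparisons of integers, so the per-step cost is indeed constant. No data-structural difficulty arises beyond the trie, so this step should not be an obstacle.
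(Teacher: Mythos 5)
Your proposal is correct and follows essentially the same route as the paper: iterate over prefixes, use Observation~\ref{obs:obv} to bound each intermediate set by $z$, extend by each letter with $\Oh(1)$ probability updates, and avoid $\Oh(n)$-size string copies via a compact linked representation (your trie is just the child-pointer view of the paper's parent-pointer triples $(c,p,S')$). Nothing further is needed.
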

\begin{proof}
  If $S \in \Match_z(W)$, then $S[1\dd i] \in \Match_z(W[1 \dd i])$ for every index~$i$.
  Hence, the algorithm computes the sets $\Match_z$ for subsequent prefixes of $W$.
  Each string $S\in \Match_z(W[1 \dd i])$ is represented as a triple $(c,p,S')$, where $c=S[i]$ is the last letter
  of $S$, $p = \P(S,W[1 \dd i])$, and $S'=S[1\dd i-1]$ points to an element of $\Match_z(W[1 \dd i-1])$.
  Such a triple is represented in $\Oh(1)$ space.
  
  Assume that $\Match_z(W[1 \dd i-1])$ has already been computed.
  Then, for every $S'=(c',p',S'') \in \Match_z(W[1 \dd i-1])$ and every $c \in \Sigma$, if $p :=p'\cdot  \pi^{(W)}_i(c) \ge\fr$,
  then the algorithm adds $(c,p,S')$ to $\Match_z(W[1 \dd i])$.

  By Observation~\ref{obs:obv}, $|\Match_z(W[1 \dd i-1])| \le z$ and $|\Match_z(W[1 \dd i])| \le z$.
  Hence, the $\Oh(nz)$ time complexity follows.
\end{proof}

\begin{proposition}\label{prop:1}
  The \WSCS problem can be solved in $\Oh(n^2z^2)$ time if $|\Sigma|=\Oh(1)$.
\end{proposition}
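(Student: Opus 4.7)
The plan is to reduce \WSCS to computing the shortest common supersequence of pairs of standard strings, one from $\Match_z(W_1)$ and one from $\Match_z(W_2)$. Indeed, a standard string $S$ satisfies $W_1 \R{z} S$ if and only if $S$ contains some $S_1 \in \Match_z(W_1)$ as a subsequence, and analogously for $W_2$. Hence the shortest $S$ feasible for $\WSCS(W_1,W_2,z)$ has length exactly
\[
  \min\bigl\{\,|\SCS(S_1,S_2)| \;:\; S_1 \in \Match_z(W_1),\; S_2 \in \Match_z(W_2)\,\bigr\},
\]
and the minimizing $\SCS(S_1,S_2)$ can be returned.

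First I would apply Lemma~\ref{lem:mzw} to compute the two sets $\Match_z(W_1)$ and $\Match_z(W_2)$ in $\Oh(nz)$ time each. By Observation~\ref{obs:obv}, both sets have size at most $z$, and every string they contain has length at most $n$.

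Next I would iterate over all at most $z^2$ pairs $(S_1,S_2) \in \Match_z(W_1)\times \Match_z(W_2)$ and, for each pair, run the textbook quadratic dynamic-programming algorithm for the (unweighted) \SCS problem, which takes $\Oh(n^2)$ time per pair and yields both $|\SCS(S_1,S_2)|$ and a witness supersequence. Summing over all pairs gives $\Oh(n^2 z^2)$ total time, dominating the $\Oh(nz)$ preprocessing. I would then output a witness \SCS attaining the minimum.

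There is no real obstacle here: the only point that needs a short justification is the equivalence above, which is immediate from the definitions of $\R{z}$ and $\Match_z$ (if $W_i \R{z} S$, the subsequence of $S$ witnessing this must be an element of $\Match_z(W_i)$, and conversely any supersequence of some $S_i \in \Match_z(W_i)$ is a feasible solution). The rest is bookkeeping on top of Lemma~\ref{lem:mzw} and classical \SCS.
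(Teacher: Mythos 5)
Your proposal is correct and follows essentially the same approach as the paper: compute $\Match_z(W_1)$ and $\Match_z(W_2)$ via Lemma~\ref{lem:mzw}, then take the shortest $\SCS(S_1,S_2)$ over all at most $z^2$ pairs using the classical quadratic dynamic program. The extra justification you give for the equivalence with the $\R{z}$ relation is fine and matches what the paper leaves implicit.
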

\begin{proof}
  The algorithm builds $\Match_z(W_1)$ and $\Match_z(W_2)$ using Lemma~\ref{lem:mzw}.
  These sets have size at most $z$ by Observation~\ref{obs:obv}.
  The result is the shortest string in
  \[\{\SCS(S_1,S_2)\,:\,S_1 \in \Match_z(W_1),\,S_2\in\Match_z(W_2)\}.\]
  Recall that the \SCS
  of two strings can be computed in $\Oh(n^2)$ time using a standard dynamic programming algorithm~\cite{DBLP:books/daglib/0023376}.
\end{proof}

\noindent We substantially improve upon this upper bound in \cref{sec:DP,sec:improve}.

\subsection{Meet-in-the-middle Technique}
In the decision version of the \Knapsack problem, we are given $n$ items with weights $w_i$ and values $v_i$,
and we seek for a subset of items with total weight up to $W$ and total value at least $V$.
In the classic meet-in-the-middle solution to the \Knapsack problem by
Horowitz and Sahni~\cite{DBLP:journals/jacm/HorowitzS74}, the items are divided into two sets $S_1$ and $S_2$
of sizes roughly $\frac12n$.
Initially, the total value and the total weight is computed for every subset of elements of each set $S_i$.
This results in two sets $A,B$, each with $\Oh(2^{n/2})$ pairs of numbers. The algorithm needs to pick a pair from each set
such that the first components of the pairs sum up to at most $W$ and the second components sum up to at least $V$.
This problem can be solved in linear time w.r.t.\ the set sizes provided that the pairs in both sets $A$ and $B$
are sorted by the first component.

Let us introduce a modified version this problem.

\defproblem{$\HS(A,B,w)$}{
  Two sets $A$ and $B$ of points in 2 dimensions and a threshold $w$.
}{
  Do there exist $(x_1,y_1) \in A$, $(x_2,y_2) \in B$ such that $x_1x_2,y_1y_2 \ge w$?
}

A linear-time solution to this problem is the same as for the problem in 
the meet-in-the-middle solution for \Knapsack.
However, for completeness we prove the following lemma
(see also \cite[Lemma 5.6]{DBLP:journals/mst/KociumakaPR19}):

\begin{lemma}[Horowitz and Sahni~\cite{DBLP:journals/jacm/HorowitzS74}]\label{lem:middle}
  The $\HS$ problem can be solved in linear time assuming that the points in $A$ and $B$
  are sorted by the first component.
\end{lemma}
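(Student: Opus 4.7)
The plan is to use the standard two-pointer sweep from the classical meet-in-the-middle argument. Since the points come from weighted-string probabilities, we may assume all coordinates are non-negative and $w > 0$ (points with a zero coordinate can be discarded). The key observation is monotonicity: if we process $(x_1,y_1) \in A$ in order of increasing $x_1$, then the set of $B$-points satisfying the $x$-condition, namely $\{(x_2,y_2) \in B : x_2 \ge w/x_1\}$, only grows, because the threshold $w/x_1$ decreases. So we can sweep $B$ with a single pointer moving in the direction of decreasing $x_2$.

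First I would preprocess $B$ so that it can be traversed in order of decreasing first coordinate; since $B$ is given sorted by the first coordinate, this is free (just walk from the last element). I would then run a single sweep: maintain a pointer $j$ into $B$ (initially positioned before any element) and a running maximum $M$ of the values $y_2$ among the $B$-points already activated. I iterate over $(x_1,y_1) \in A$ in the given sorted order; for each such point, I advance $j$ as long as the next $B$-point $(x_2,y_2)$ (in the decreasing-$x_2$ direction) satisfies $x_1 x_2 \ge w$, updating $M \leftarrow \max(M, y_2)$ after each activation. After advancing, I test whether $y_1 \cdot M \ge w$; if yes for some $(x_1,y_1)$, I return \emph{yes}, otherwise \emph{no}.

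Correctness follows from the monotonicity: when we reach $(x_1,y_1)$, the activated subset of $B$ is exactly $\{(x_2,y_2) \in B : x_1 x_2 \ge w\}$, and the best chance to satisfy $y_1 y_2 \ge w$ over this subset is to take the largest $y_2$, which is $M$. Thus the test answers the existential question for this $(x_1,y_1)$, and ranging over all of $A$ covers every candidate pair. The running time is $\Oh(|A|+|B|)$, since each element of $A$ is visited once and the pointer $j$ only advances monotonically through $B$, with $\Oh(1)$ work per step (including the update of $M$ and the final comparison).

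The only subtle point is the direction of the sweep and the use of a \emph{maximum} on the $y$-side rather than a more general data structure: this works precisely because the $y$-condition $y_1 y_2 \ge w$ is monotone in $y_2$, so over any activated subset the optimal choice is the maximum. I do not expect a real obstacle; the argument is essentially the one used for the original \Knapsack meet-in-the-middle procedure, adapted from additive conditions $x_1+x_2 \le W$, $y_1+y_2 \ge V$ to the multiplicative conditions $x_1 x_2 \ge w$, $y_1 y_2 \ge w$.
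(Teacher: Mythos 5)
Your proof is correct and takes essentially the same approach as the paper: a single monotone two-pointer sweep over $A$ (by increasing $x$) and $B$ (by decreasing $x$), exploiting that the feasible subset of $B$ only grows. The only cosmetic difference is that you maintain a running maximum of $y_2$ over the activated points of $B$, whereas the paper first removes Pareto-dominated (\emph{irrelevant}) points so that the pointer's current element automatically maximizes $y'$; both yield the same $\Oh(|A|+|B|)$ bound.
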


\begin{proof}
  A pair $(x,y)$ is \emph{irrelevant} if there is another pair $(x',y')$ in the same set such that $x' \ge x$ and $y' \ge y$.
  Observe that removing an irrelevant point from $A$ or $B$ leads to an equivalent instance of the $\HS$ problem.

  Since the points in $A$ and $B$ are sorted by the first component,
  a single scan through these pairs suffices to remove all irrelevant elements.
  Next, for each $(x,y)\in A$, the algorithm computes $(x',y')\in B$ such that
  $x'\ge w / x$ and additionally $x'$ is smallest possible.
  As the irrelevant elements have been removed from $B$, this point also maximizes $y'$ among all pairs satisfying
  $x'\ge w / x$.
  If the elements $(x,y)$ are processed by non-decreasing values $x$, the values $x'$ do not increase, and
  thus the points $(x',y')$ can be computed in $\Oh(|A|+|B|)$ time in total.
\end{proof}

\section{Dynamic Programming Algorithm for \WSCS}\label{sec:DP}
Our algorithm is based on dynamic programming.
We start with a less efficient procedure and then improve it in the next section.
Henceforth, we only consider computing the length of the \WSCS; an actual common supersequence of this length can be recovered from the dynamic programming
using a standard approach (storing the parent of each state).

For a weighted string $W$, we introduce a data structure that stores, for every index $i$,
the set $\{\P(S,W[1 \dd i])\,:\,S \in \Match_z(W[1 \dd i])\}$ represented as an array of size at most $z$
(by Observation~\ref{obs:obv}) with entries in the increasing order.
This data structure is further denoted as $\Freq_i(W,z)$.
Moreover, for each element $p \in \Freq_{i+1}(W,z)$ and each letter $c \in \Sigma$,
a pointer to
$p'=p\,/\, \pi^{(W)}_{i+1}(c)$ in $\Freq_{i}(W,z)$ is stored provided that $p' \in \Freq_{i}(W,z)$.
A proof of the next lemma is essentially the same as of Lemma~\ref{lem:mzw}.

\begin{lemma}\label{lemma:Pz}
  For a weighted string $W$ of length $n$, the arrays $\Freq_i(W,z)$, with $i\in [1\dd n]$,
  can be constructed in $\Oh(nz)$ total time if $|\Sigma|=\Oh(1)$.
\end{lemma}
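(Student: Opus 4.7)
The plan is to construct the arrays iteratively, mirroring the dynamic programming of Lemma~\ref{lem:mzw} while additionally maintaining the sorted order and the required back-pointers. We initialize with $\Freq_0(W,z) = \{1\}$ (corresponding to the empty prefix), and for each $i \in [0 \dd n-1]$ we compute $\Freq_{i+1}(W,z)$ from $\Freq_i(W,z)$ in $\Oh(z)$ time; summing over $i$ yields the claimed $\Oh(nz)$ total bound.

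For the inductive step, observe that every probability in $\Freq_{i+1}(W,z)$ equals $p' \cdot \pi^{(W)}_{i+1}(c)$ for some $p' \in \Freq_i(W,z)$ and some $c \in \Sigma$ with $p' \cdot \pi^{(W)}_{i+1}(c) \ge \fr$. For each letter $c$ with $\pi^{(W)}_{i+1}(c) > 0$, we form an auxiliary array $L_c$ by multiplying every entry of $\Freq_i(W,z)$ by $\pi^{(W)}_{i+1}(c)$ and discarding values strictly less than $\fr$. Since multiplication by a positive constant preserves order, each $L_c$ is sorted in increasing order and has at most $z$ entries. We then merge the $|\Sigma| = \Oh(1)$ sorted arrays $L_c$ into a single sorted array, coalescing equal values, and take the result as $\Freq_{i+1}(W,z)$, which has at most $z$ entries by Observation~\ref{obs:obv}.

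During the merge, whenever an output entry $p$ originates from position $j$ of some $L_c$, we know that $p = \Freq_i(W,z)[j] \cdot \pi^{(W)}_{i+1}(c)$, so we set the back-pointer from $(p,c)$ to $\Freq_i(W,z)[j]$. If the same $p$ is produced by several $L_c$'s simultaneously (equal values coalescing), this assigns pointers for each such $c$ in a single pass; for any letter $c$ that did not contribute $p$, the pointer is correctly left unset, since then $p / \pi^{(W)}_{i+1}(c) \notin \Freq_i(W,z)$. Under the log-probability word-RAM model, each multiplication and comparison takes $\Oh(1)$ time, so a standard multiway merge processes all $\Oh(|\Sigma| \cdot z) = \Oh(z)$ entries per iteration in linear time. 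The only subtle point is carrying out deduplication and back-pointer assignment together during the merge, but both are handled uniformly by recording, for each output element, the set of source $(c,j)$ pairs that produced it.
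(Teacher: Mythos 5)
Your proof is correct and follows essentially the same route as the paper's: build per-letter sorted lists $L_c$ by multiplying $\Freq_i(W,z)$ by $\pi^{(W)}_{i+1}(c)$ and filtering by $\fr$, then merge the $\Oh(1)$ sorted lists with deduplication in $\Oh(z)$ time per position while setting the pointers. The extra detail you give on assigning back-pointers during the merge is a faithful elaboration of what the paper leaves implicit.
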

\begin{proof}
 Assume that $\Freq_i(W,z)$ is computed.
 For every $c \in \Sigma$, we create a list
 \[L_c=\{p \cdot \pi^{(W)}_{i+1}(c)\,:\,p \in \Freq_i(W,z),\, p\cdot \pi^{(W)}_{i+1}(c) \ge \fr\}.\]
 The lists are sorted since $\Freq_i(W,z)$ was sorted.
 Then $\Freq_{i+1}(W,z)$ can be computed by merging all the lists $L_c$ (removing duplicates).
 This can be done in $\Oh(z)$ time since $\sigma=\Oh(1)$.
 The desired pointers can be computed within the same time complexity.
\end{proof}

\noindent 
Let us extend the $\WSCS$ problem in the following way:

\defproblem{$\WSCS'(W_1,W_2,\ell,p,q)$:}{
  Weighted strings $W_1,W_2$, an integer $\ell$, and probabilities  $p,q$.
}{
  Is there a string $S$ of length $\ell$ with subsequences $S_1$ and $S_2$ such that $\P(S_1,W_1)=p$ and $\P(S_2,W_2)=q$?
}

In the following, a {\it state} in the dynamic programming denotes a quadruple $(i,j,\ell,p)$, where $i\in [0\dd |W_1|]$, $j\in [0\dd |W_2|]$, $\ell \in [0\dd |W_1|+|W_2|]$, and $p\in \Freq_i(W_1,z)$. 

\begin{observation}\label{obs:n3z}
There are $\Oh(n^3z)$ states.
\end{observation}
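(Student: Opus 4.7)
The plan is to bound each of the four coordinates of the state tuple $(i,j,\ell,p)$ independently and then take the product. Three of the coordinates have trivial ranges given by the problem: $i\in [0\dd |W_1|]$ contributes at most $n+1$ values, $j\in [0\dd |W_2|]$ contributes at most $n+1$ values, and $\ell\in [0\dd |W_1|+|W_2|]$ contributes at most $2n+1$ values. Together these three coordinates account for a factor of $\Oh(n^3)$.

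The only nontrivial coordinate is the probability $p\in \Freq_i(W_1,z)$. Here I would invoke Observation~\ref{obs:obv}, which guarantees $|\Match_z(W_1[1\dd i])|\le z$. Since $\Freq_i(W_1,z)$ is defined as the set of matching probabilities $\{\P(S,W_1[1\dd i]) : S\in \Match_z(W_1[1\dd i])\}$, its cardinality is bounded (as a set image) by $|\Match_z(W_1[1\dd i])|\le z$. Multiplying, the total number of states is at most $(n+1)(n+1)(2n+1)\cdot z = \Oh(n^3 z)$.

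There is no real obstacle here: once one recalls the size bound on $\Match_z$ from Observation~\ref{obs:obv}, the argument is a direct product of the coordinate ranges. The only thing worth mentioning is that $\Freq_i(W_1,z)$ is an image of $\Match_z(W_1[1\dd i])$ under $S\mapsto \P(S,W_1[1\dd i])$, so its size cannot exceed that of the domain, even though different strings could in principle collapse to the same probability.
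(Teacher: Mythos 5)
Your argument is correct and matches the paper's (implicit) reasoning: the paper defines a state as a quadruple $(i,j,\ell,p)$ with $i,j$ ranging over $\Oh(n)$ values each, $\ell$ over $\Oh(n)$ values, and $p\in\Freq_i(W_1,z)$, whose size is at most $z$ by Observation~\ref{obs:obv}. Your remark that $\Freq_i(W_1,z)$ is an image of $\Match_z(W_1[1\dd i])$ and hence no larger is exactly the point the paper relies on, so there is nothing to add.
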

In the dynamic programming, for all states $(i,j,\ell,p)$, we compute
\begin{equation}\label{eq:DP}
  \DP[i,j,\ell,p] = \max\{q\,:\,\WSCS'(W_1[1 \dd i],W_2[1 \dd j],\ell,p,q)=\mathbf{true}\}.
\end{equation}



Let us denote $\pi^k_i(c)=\pi^{(W_k)}_i(c)$.
Initially, the array $\DP$ is filled with zeroes, except that the values $\DP[0,0,\ell,1]$ for $\ell\in [0\dd |W_1|+|W_2|]$
are set to 1.
In order to cover corner cases, we assume that $\pi_0^{1}(c)=\pi_0^{2}(c)=1$ for any $c \in \Sigma$ and that $\DP[i,j,\ell,p]=0$ if $(i,j,\ell,p)$ is not a state.
The procedure $\mathsf{Compute}$ implementing the dynamic-programming algorithm is shown as \cref{alg:compute}.

  \begin{algorithm}[ht]
  \setstretch{1.25}
  \caption{$\mathsf{Compute}(W_1,W_2,z)$}\label{alg:compute}
  \For{$\ell=0$ \KwSty{to} $|W_1|+|W_2|$}{
    $\DP[0,0,\ell,1]:=1$\;
  }
  \ForEach{state $(i,j,\ell,p)$ in lexicographic order}{
    \ForEach{$c \in \Sigma$}{
      $x:=\pi^{1}_i(c)$; $y:=\pi^{2}_j(c)$\;
      $\DP[i,j,\ell,p]:=\max\{$\\
      $\quad\DP[i,j,\ell,p],$\\
      $\quad\DP[i-1,j,\ell-1,\frac{p}{x}],$\\
      $\quad y\cdot\DP[i,j-1,\ell-1,p],$\\
      $\quad y\cdot\DP[i-1,j-1,\ell-1,\frac{p}{x}]$\\
      $\}$\;
    }
  }
  \Return{$\min\,\{\ell\;:\; \DP[|W_1|,|W_2|,\ell,p] \ge \fr\ \text{for some}\ p \in \Freq_{|W_1|}(W_1,z)\}$;}
  \end{algorithm}

The correctness of the algorithm is implied by the following lemma:
\begin{lemma}[Correctness of \cref{alg:compute}]\label{lem:corr1}
  The array $\DP$ satisfies \eqref{eq:DP}.
  In particular, we have $\mathsf{Compute}(W_1,W_2,z)=\WSCS(W_1,W_2,z)$.
\end{lemma}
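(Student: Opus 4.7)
\medskip
\noindent\textbf{Proof plan.}
The plan is to prove equation~\eqref{eq:DP} by induction over the order in which states are processed by \cref{alg:compute}, and then to derive $\mathsf{Compute}(W_1,W_2,z)=\WSCS(W_1,W_2,z)$ as an immediate consequence. The base case $(i,j)=(0,0)$ is covered by the initialization $\DP[0,0,\ell,1]=1$ for every $\ell\in[0\dd |W_1|+|W_2|]$: the only feasible pair for empty prefixes is $p=q=1$, witnessed by an arbitrary string of length~$\ell$, while every other $(0,0,\ell,p)$ with $p\ne 1$ is infeasible and therefore consistent with the default $\DP=0$.

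For the inductive step at a state $(i,j,\ell,p)$ with $i+j>0$, I would prove the two inequalities separately. The upper bound $\DP[i,j,\ell,p]\le\max\{q:\WSCS'(W_1[1\dd i],W_2[1\dd j],\ell,p,q)=\mathbf{true}\}$ is routine: each of the three non-trivial terms in the $\max$ corresponds to extending a previously-computed witness by one letter~$c$, and the probability factor records whether $c$ is used by $S_1$ only, by $S_2$ only, or by both. For instance, a triple $(S',S_1',S_2)$ witnessing $\DP[i-1,j,\ell-1,p/x]=q$ turns into a witness of $\WSCS'(W_1[1\dd i],W_2[1\dd j],\ell,p,q)$ upon appending $c$ to both $S'$ and $S_1'$, since the matching probability on the first coordinate becomes $(p/x)\cdot x=p$ while $q$ is preserved on the second.

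The lower bound $\DP[i,j,\ell,p]\ge\max\{q:\WSCS'(\ldots)=\mathbf{true}\}$ is the subtle step, because \cref{alg:compute} has no transition that would keep $i,j$ fixed while adding a filler character; at first glance it is unclear that witnesses~$S$ whose last symbol belongs to neither of the embedded subsequences $S_1,S_2$ are captured. I would fix this with a short canonicalization argument: given any witness $(S,S_1,S_2)$ with $|S|=\ell$, iteratively swap every position of~$S$ that is used by neither embedding with its left neighbour. Each such swap reindexes the two strictly increasing embeddings by at most $\pm 1$ in a single coordinate, and strict monotonicity is preserved precisely because the shifted position was unused; the standard strings $S_1$ and $S_2$ themselves do not change, so their matching probabilities are unaffected. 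After finitely many swaps the unused positions form a prefix of~$S$, so $S[\ell]$ is used by $S_1$, by $S_2$, or by both, and a three-way case analysis combined with the inductive hypothesis applied to $S[1\dd\ell-1]$ yields the desired bound via exactly the three non-trivial transitions.

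Finally, given~\eqref{eq:DP}, the returned value $\min\{\ell:\DP[|W_1|,|W_2|,\ell,p]\ge\fr\text{ for some }p\in\Freq_{|W_1|}(W_1,z)\}$ unfolds to the shortest~$\ell$ for which some string of length~$\ell$ admits subsequences matching $W_1$ and $W_2$ each with probability at least~$\fr$, which is precisely $\WSCS(W_1,W_2,z)$. The canonicalization step is the only non-routine ingredient; everything else is standard DP bookkeeping.
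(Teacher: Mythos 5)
Your proof is correct and follows essentially the same route as the paper's: induction on $i+j$ (equivalently, on the processing order of states), proving the two inequalities of \eqref{eq:DP} separately, with the upper bound obtained by turning each transition of \cref{alg:compute} into a one-letter extension of a witness and the lower bound by a case analysis on how the last symbol of a witness $S$ is used by the embeddings of $S_1$ and $S_2$. The only genuine divergence is in the one subtle case you single out, where $S[\ell]$ is used by neither embedding: the paper handles it in one line by overwriting $S[\ell]$ with $S_1[i]$, which is legitimate because when $S_1[i]\ne S[\ell]\ne S_2[j]$ neither embedding can touch position $\ell$, so $S$ stays a common supersequence of $S_1$ and $S_2$ and the probabilities $p,q$ are untouched; you instead canonicalize the witness by bubbling the unused positions of $S$ to the front with adjacent swaps. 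Your swap argument is sound --- strict monotonicity of each embedding survives precisely because the position swapped past it was unused, and $S_1,S_2$ (hence $p$ and $q$) never change --- so after canonicalization $S[\ell]$ is the last symbol of at least one embedding and the same three-way case analysis applies; it is simply a slightly longer route to the same reduction, whereas the paper's letter-replacement buys brevity at no cost since only the length $\ell$ and the two matching probabilities matter.
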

\begin{proof}
  The proof that $\DP$ satisfies \eqref{eq:DP} goes by induction on $i+j$.
  The base case of $i+j=0$ holds trivially.
  It is simple to verify the cases that $i=0$ or $j=0$.
  Let us henceforth assume that $i>0$ and $j>0$.

  We first show that
  \[\DP[i,j,\ell,p] \le \max\{q\,:\,\WSCS'(W_1[1 \dd i],W_2[1 \dd j],\ell,p,q)=\mathbf{true}\}.\]
  The value $q=\DP[i,j,\ell,p]$ was derived from $\DP[i-1,j,\ell-1,p\,/\,x]=q$, or $\DP[i,j-1,\ell-1,p]=q\,/\,y$, or
  $\DP[i-1,j-1,\ell-1,p\,/\,x]=q\,/\,y$,
  where $x = \pi_i^1(c)$ and $y = \pi_j^2(c)$ for some $c\in \Sigma$.
  In the first case, by the inductive hypothesis, there exists a string $T$
  that is a solution to $\WSCS'(W_1[1 \dd i-1],W_2[1 \dd j],\ell-1,p\,/\,x,q)$.
  That is, $T$ has subsequences $T_1$ and $T_2$ such that
  \[\P(T_1,W_1[1 \dd i-1])=p\,/\,x\quad\text{and}\quad\P(T_2,W_2[1 \dd j])=q.\]
  Then, for $S=T c$, $S_1=T_1c$, and $S_2=T_2$, we indeed have 
  \[\P(S_1,W_1[1 \dd i])=p\quad\text{and}\quad\P(S_2,W_2[1 \dd j])=q\]
  The two remaining cases are analogous.

  \medskip 
  Let us now show that
  \[\DP[i,j,\ell,p] \ge \max\{q\,:\,\WSCS'(W_1[1 \dd i],W_2[1 \dd j],\ell,p,q)=\mathbf{true}\}.\]
  Assume a that string $S$ is a solution to $\WSCS'(W_1[1 \dd i],W_2[1 \dd j],\ell,p,q)$.
  Let $S_1$ and $S_2$ be the subsequences of $S$ such that $\P(S_1,W_1)=p$ and $\P(S_2,W_2)=q$.
  
  Let us first consider the case that $S_1[i] = S[\ell] \ne S_2[j]$.
  Then $T_1 = S_1[1\dd i-1]$ and $T_2 = S_2$ are subsequences of $T = S[1\dd \ell-1]$.
  We then have
  \[p' := \P(T_1,W_1[1 \dd i-1]) = p/\pi_{i}^1(S_1[i]).\]
  By the inductive hypothesis, $\DP[i-1,j,\ell-1,p'] \ge q$.
  Hence, $\DP[i,j,\ell,p] \ge q$ because $\DP[i-1,j,\ell-1,p']$ is present as the second argument of the maximum in the dynamic programming algorithm for $c=S[\ell]$.
 
  The cases that $S_1[i] \ne S[\ell] = S_2[j]$ and that $S_1[i] = S[\ell] = S_2[j]$ rely on the values
  $\DP[i,j-1,\ell-1,p] \ge q\,/\,y$ and $\DP[i-1,j-1,\ell-1,p\,/\,x] \ge q\,/\,y$, respectively.

  Finally, the case $S_1[i] \ne S[\ell] \ne S_2[j]$ reduces to one of the previous cases by changing $S[\ell]$ to $S_1[i]$ so that
  $S$ is still a supersequence of $S_1$ and $S_2$ and a solution to $\WSCS'(W_1[1 \dd i],W_2[1 \dd j],\ell,p,q)$.
\end{proof}

\begin{proposition}\label{prop:2}
  The \WSCS problem can be solved in $\Oh(n^3z)$ time if $|\Sigma|=\Oh(1)$.
\end{proposition}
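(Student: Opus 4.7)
The plan is to bound the running time of Algorithm~\ref{alg:compute}, since its correctness is already established by \cref{lem:corr1}. First I would preprocess by invoking \cref{lemma:Pz} on both $W_1$ and $W_2$ to obtain the arrays $\Freq_i(W_1,z)$ and $\Freq_i(W_2,z)$ in $\Oh(nz)$ total time; crucially, for every $p\in\Freq_{i}(W,z)$ and every $c\in\Sigma$, the data structure provides an $\Oh(1)$-time pointer to $p/\pi_{i}^{(W)}(c)$ in $\Freq_{i-1}(W,z)$ whenever that value is present.

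Next I would represent $\DP$ as a four-dimensional table indexed by $(i,j,\ell,p)$, where the fourth coordinate is stored as an index into $\Freq_i(W_1,z)$. Any value $q=\DP[i,j,\ell,p]$ that is positive equals $\P(S_2,W_2[1\dd j])$ for some $S_2\in\Match_z(W_2[1\dd j])$, so it lies in $\Freq_j(W_2,z)$ and can likewise be kept as an $\Oh(1)$-word handle (e.g., its rank). Combined with \cref{obs:n3z}, this gives an $\Oh(n^3z)$ bound on the total memory footprint of the table.

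The heart of the time analysis is that each state can be updated in $\Oh(1)$ time. For a state $(i,j,\ell,p)$ and each $c\in\Sigma$, the probabilities $x=\pi^1_i(c)$ and $y=\pi^2_j(c)$ are available in $\Oh(1)$; the entries $\DP[i-1,j,\ell-1,p/x]$ and $\DP[i-1,j-1,\ell-1,p/x]$ are located in $\Oh(1)$ through the pointers precomputed by \cref{lemma:Pz}, while $\DP[i,j-1,\ell-1,p]$ has the same fourth coordinate $p$ and is reached directly. Multiplication by $y$ and the four-way maximum cost $\Oh(1)$ under the log-probability word-RAM model described in the Preliminaries. Since $|\Sigma|=\Oh(1)$, each state costs $\Oh(1)$ and the lexicographic traversal (which is a valid evaluation order because every dependency has strictly smaller $(i+j,\ell)$) completes in $\Oh(n^3z)$ time. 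A concluding scan over the $\Oh(nz)$ states with $i=|W_1|$ and $j=|W_2|$ reports the smallest $\ell$ with $\DP[|W_1|,|W_2|,\ell,p]\ge\fr$, which fits within the same bound; recovering an actual supersequence uses the standard parent-pointer trick.

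The main obstacle, which the $\Freq$ machinery is designed to overcome, is the constant-time lookup of the entries $\DP[\cdot,\cdot,\cdot,p/x]$. A naive implementation would locate $p/x$ inside $\Freq_{i-1}(W_1,z)$ by binary search, incurring an $\Oh(\log z)$ overhead per transition and pushing the bound to $\Oh(n^3z\log z)$; the precomputed pointers collapse this lookup to $\Oh(1)$, which is exactly what makes the $\Oh(n^3z)$ bound meet the state count of \cref{obs:n3z} without loss.
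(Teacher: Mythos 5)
Your proposal is correct and follows essentially the same route as the paper: correctness via \cref{lem:corr1}, the $\Oh(n^3z)$ state count of \cref{obs:n3z}, and the $\Freq_i$ arrays of \cref{lemma:Pz} whose precomputed pointers make each of the $\Oh(1)$ transitions per state constant-time, with the fourth coordinate stored as a position in $\Freq_i(W_1,z)$. The only cosmetic difference is your extra bookkeeping for the stored values $q$: the paper simply keeps them as $\Oh(1)$-word quantities in the log-probability model, which also sidesteps your slightly inaccurate side remark that every positive entry lies in $\Freq_j(W_2,z)$ (unpruned intermediate values may drop below $\fr$), though this does not affect the stated time bound.
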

\begin{proof}
  The correctness follows from Lemma~\ref{lem:corr1}.
  As noted in Observation~\ref{obs:n3z}, the dynamic programming has $\Oh(n^3 z)$ states.
  The number of transitions from a single state is constant provided that $|\Sigma| = \Oh(1)$.

  Before running the dynamic programming algorithm of Proposition~\ref{prop:2},
  we construct the data structures $\Freq_i(W_1,z)$ for all $i\in[1\dd n]$ using Lemma~\ref{lemma:Pz}.
  The last dimension in the $\DP[i,j,\ell,p]$ array can then be stored as a position in $\Freq_i(W_1,z)$.
  The pointers in the arrays $\Freq_i$ are used to follow transitions.
\end{proof}

\section{Improvements}\label{sec:improve}
\subsection{First Improvement: Bounds on $\ell$}
Our approach here is to reduce the number of states $(i,j,\ell,p)$ in \cref{alg:compute} from
$\Oh(n^3z)$ to $\Oh(n^2z\log z)$. This is done by limiting the number of values of $\ell$ considered for
each pair of indices $i,j$ from $\Oh(n)$ to $\Oh(\log z)$.

For a weighted string $W$, we define $\H(W)$ as a standard string generated by taking
the most probable letter at each position, breaking ties arbitrarily. The string $\H(W)$ is also called the \emph{heavy} string of $W$.
By $d_H(S,T)$ we denote the Hamming distance of strings $S$ and $T$.
Let us recall an observation from~\cite{DBLP:journals/mst/KociumakaPR19}.
\begin{observation}[{\cite[Observation 4.3]{DBLP:journals/mst/KociumakaPR19}}]\label{obs:3}
  If $S \match{z} W$ for a string $S$ and a weighted string $W$,
  then $d_H(S,\H(W))\le \log_2 z$.
\end{observation}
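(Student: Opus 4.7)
The plan is to bound, at each mismatched position, the probability contribution of $S$ with respect to $W$, and then multiply across all positions.

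First I would observe that at any position $i$ where $S[i] \neq \H(W)[i]$, we have $\pi^{(W)}_i(S[i]) \leq \tfrac{1}{2}$. Indeed, $\H(W)[i]$ is by definition a most probable letter at position $i$, so $\pi^{(W)}_i(\H(W)[i]) \geq \pi^{(W)}_i(S[i])$. Since these two letters are distinct and the probabilities over $\Sigma$ sum to at most $1$, we get $2\pi^{(W)}_i(S[i]) \leq \pi^{(W)}_i(S[i]) + \pi^{(W)}_i(\H(W)[i]) \leq 1$, hence $\pi^{(W)}_i(S[i]) \leq \tfrac12$.

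Next, let $k = d_H(S, \H(W))$. Then the product defining $\P(S, W)$ contains exactly $k$ factors bounded by $\tfrac12$ (those at mismatched positions) and $|W|-k$ factors bounded by $1$. Hence
\[
\P(S,W) \;=\; \prod_{i=1}^{|W|} \pi^{(W)}_i(S[i]) \;\leq\; 2^{-k}.
\]
Combining with the hypothesis $S \match{z} W$, which says $\P(S,W) \geq \tfrac{1}{z}$, we obtain $2^{-k} \geq \tfrac{1}{z}$, i.e., $k \leq \log_2 z$, which is the claimed bound.

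There is no real obstacle here; the only subtle point is the two-letter pigeonhole argument that forces the mismatched-letter probability below $\tfrac12$, and it works for any alphabet size (one does not need $|\Sigma|=2$). The rest is a direct multiplicative estimate.
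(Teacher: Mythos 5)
Your proof is correct and is exactly the standard argument behind this observation (the paper itself only cites it from Kociumaka et al.\ without reproving it): each mismatch with the heavy string costs a factor of at most $\tfrac12$, so $\P(S,W)\le 2^{-d_H(S,\H(W))}$, and $\P(S,W)\ge\tfrac1z$ forces $d_H(S,\H(W))\le\log_2 z$. Nothing further is needed.
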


The lemma below follows from Observation~\ref{obs:3}.
\begin{lemma}\label{lem:gem}
  If strings $S_1$ and $S_2$ satisfy $S_1 \match{z} W_1$ and $S_2 \match{z} W_2$, then
  \[|\SCS(S_1,S_2) - \SCS(\H(W_1),\H(W_2))| \le 2\log_2 z.\]
  %
  %
  %
\end{lemma}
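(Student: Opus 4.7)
The plan is to translate the statement into one about longest common subsequences using the folklore identity \eqref{eq:LCS_SCS}, and then invoke Observation~\ref{obs:3} together with a standard robustness lemma for $\LCS$ under single-letter changes.

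First I would observe that $|S_1|=|W_1|=|\H(W_1)|$ and $|S_2|=|W_2|=|\H(W_2)|$, since a weighted string, its heavy string, and any string matching it with positive probability all share the same length. Combining this with \eqref{eq:LCS_SCS} applied to both pairs $(S_1,S_2)$ and $(\H(W_1),\H(W_2))$ yields
\[
\bigl||\SCS(S_1,S_2)| - |\SCS(\H(W_1),\H(W_2))|\bigr| = \bigl||\LCS(S_1,S_2)| - |\LCS(\H(W_1),\H(W_2))|\bigr|,
\]
so it suffices to bound the right-hand side by $2\log_2 z$.

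Next I would establish the auxiliary fact that for any strings $X,X',Y$ with $d_H(X,X')\le k$ (in particular $|X|=|X'|$), one has $\bigl||\LCS(X,Y)|-|\LCS(X',Y)|\bigr|\le k$. The argument is standard: changing a single position in $X$ can invalidate at most one letter in any common subsequence with $Y$, so deleting that letter gives a common subsequence of $X'$ and $Y$ that is at most one letter shorter; iterating over the $k$ altered positions yields the claim. Applying this fact once to change $S_1$ into $\H(W_1)$ (keeping $S_2$ fixed) and once to change $S_2$ into $\H(W_2)$ (keeping $\H(W_1)$ fixed), and invoking Observation~\ref{obs:3} to bound each Hamming distance by $\log_2 z$, gives
\[
\bigl||\LCS(S_1,S_2)| - |\LCS(\H(W_1),\H(W_2))|\bigr|\le d_H(S_1,\H(W_1)) + d_H(S_2,\H(W_2))\le 2\log_2 z,
\]
which combined with the first display yields the lemma.

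The argument involves no real obstacle; the only point that requires care is making sure the length equality needed to pass between $\SCS$ and $\LCS$ via \eqref{eq:LCS_SCS} really holds, which it does because both matchings preserve the length of the weighted string and because $\H(W_i)$ is defined position-wise. The robustness lemma for $\LCS$ under Hamming changes is folklore, but I would include the short one-line justification above for completeness rather than cite it.
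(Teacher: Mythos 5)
Your proposal is correct and follows essentially the same route as the paper: both reduce the statement to an $\LCS$ bound via the identity \eqref{eq:LCS_SCS} (using that $|S_k|=|W_k|=|\H(W_k)|$) and then invoke Observation~\ref{obs:3} together with Hamming-distance robustness of $\LCS$. The only cosmetic difference is that you change one string at a time and sum the two single-string perturbation bounds by the triangle inequality, whereas the paper deletes the mismatched positions from both strings simultaneously to get the $2d$ bound in one step.
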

\begin{proof}
  By Observation~\ref{obs:3},
  \[d_H(S_1,\H(W_1)) \le \log_2 z \quad\text{and}\quad d_H(S_2,\H(W_2)) \le \log_2 z.\]
  Due to the relation~\eqref{eq:LCS_SCS} between $\LCS$ and $\SCS$, it suffices to show the following.
  \begin{claim}
    Let $S_1,H_1,S_2,H_2$ be strings such that $|S_1|=|H_1|$ and $|S_2|=|H_2|$.
    If $d_H(S_1,H_1)\le d$ and $d_H(S_2,H_2) \le d$, then
    $|\LCS(S_1,S_2) - \LCS(H_1,H_2)| \le 2d$.
  \end{claim}
  \begin{proof}
Notice that if $S_1',S_2'$ are strings resulting from $S_1,S_2$ by removing up to $d$ letters from each 
of them, then $\LCS(S_1',S_2')\ge \LCS(S_1,S_2)-2d$.

We now create strings $S_k'$ 
for $k=1,2$, by removing  from $S_k$ letters at positions $i$ such that $S_k[i]\ne H_k[i]$. 
Then, according to the observation above, we have 
\[\LCS(S_1',S_2')\ge \LCS(S_1,S_2)-2d.\]
Any common subsequence of $S_1'$ and $S_2'$ is also a common subsequence of 
$H_1$ and $H_2$ since $S_1'$ and $S_2'$ are subsequences of $H_1$ and $H_2$, respectively.
Consequently,
\[\LCS(H_1,H_2)\ge \LCS(S_1,S_2)-2d.\]
In a symmetric way, we can show that $\LCS(S_1,S_2)\ge \LCS(H_1,H_2) - 2d$. This completes the proof
of the claim.
  \end{proof}
\noindent  We apply the claim for $H_1=\H(W_1)$, $H_2=\H(W_2)$, and $d=\log_2 z$.
\end{proof}

Let us make the following simple observation.

\begin{observation}\label{obs:WSCS_SCS}
  If $S=\WSCS(W_1,W_2,z)$, then $S=\SCS(S_1,S_2)$ for some strings $S_1$ and $S_2$ such that $W_1 \R{z} S_1$ and $W_2 \R{z} S_2$.
\end{observation}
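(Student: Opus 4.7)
The plan is purely a definition chase: I will read off the witnesses $S_1,S_2$ directly from the fact that $S$ is a WSCS, then invoke minimality of $S$ to rule out the possibility that $\SCS(S_1,S_2)$ is strictly shorter than $S$.

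More precisely, since $S=\WSCS(W_1,W_2,z)$, the defining property gives $W_1\subseteq_z S$ and $W_2\subseteq_z S$. Unfolding the definition of $\subseteq_z$, there exist subsequences $S_1$ and $S_2$ of $S$ with $S_1\approx_z W_1$ and $S_2\approx_z W_2$. I will take these as the required strings. The two conditions from the statement then hold essentially for free: $S_1$ is trivially a subsequence of itself with $S_1\approx_z W_1$, so $W_1\subseteq_z S_1$ (and analogously $W_2\subseteq_z S_2$); moreover, since $S_1$ and $S_2$ are subsequences of $S$, the string $S$ is a common supersequence of $S_1$ and $S_2$, whence $|\SCS(S_1,S_2)|\le |S|$.

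For the opposite inequality I would argue by contradiction. Suppose some string $S'$ with $|S'|<|S|$ were a common supersequence of $S_1$ and $S_2$. Then $S_1$ would be a subsequence of $S'$ matching $W_1$ with probability at least $\tfrac1z$, giving $W_1\subseteq_z S'$; symmetrically $W_2\subseteq_z S'$. But then $S'$ would be a valid candidate for $\WSCS(W_1,W_2,z)$ of length strictly smaller than $|S|$, contradicting the minimality of $S$. Hence $|\SCS(S_1,S_2)|=|S|$, and since $S$ itself is a common supersequence of $S_1$ and $S_2$ of this length, $S$ is a shortest common supersequence of $S_1$ and $S_2$, as required.

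As far as obstacles go, there really is none of substance here; the only point worth being careful about is the direction of the relation $\subseteq_z$: the condition we need, $W_k\subseteq_z S_k$, is the \emph{weaker} one than $S_k\approx_z W_k$, so taking $S_1,S_2$ to be the matching subsequences of $S$ is more than enough, and the contradiction step uses precisely the fact that a matching subsequence of $S$ is still a subsequence of any supersequence of it.
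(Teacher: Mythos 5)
Your proof is correct and is exactly the definition-chasing argument the paper has in mind when it states this as a ``simple observation'' without proof: take the witness subsequences $S_1,S_2$ of $S$, note $S$ is a common supersequence of them, and use minimality of $S$ to exclude a shorter common supersequence. Nothing is missing, and the direction of $\subseteq_z$ is handled correctly.
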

Using \cref{lem:gem}, we refine the previous algorithm as shown in \cref{alg:improved1}.

\begin{algorithm}[ht]
  \caption{$\mathsf{Improved1}(W_1,W_2,z)$}\label{alg:improved1}
  \begin{minipage}{\linewidth - \algomargin - \algomargin}
    In the beginning,  we apply the classic $\Oh(n^2)$-time dynamic-programming solution to the standard \SCS
  problem on $H_1 = \H(W_1)$ and $H_2 = \H(W_2)$.
  It computes a 2D array $T$ such that 
  \[T[i,j] = \SCS(H_1[1 \dd i],H_2[1 \dd j]).\]
  
  Let us denote an interval
  \[L[i,j]=[T[i,j] - \lfloor 2\log_2 z \rfloor\ \dd\ T[i,j] + \lfloor 2\log_2 z \rfloor].\]
  We run the dynamic programming algorithm $\mathsf{Compute}$
  restricted to states $(i,j,\ell,p)$ with $\ell \in L[i,j]$.
  
  Let $\DP'$ denote the resulting array, restricted to states satisfying $\ell \in L[i,j]$.
  We return $\min\,\{\ell\;:\; \DP'[|W_1|,|W_2|,\ell,p] \ge \fr\ \text{for some}\ p \in \Freq_{|W_1|}(W_1,z)\}$.

\end{minipage}
\end{algorithm}

\begin{lemma}[Correctness of \cref{alg:improved1}]\label{lem:corr2}
  For every state $(i,j,\ell,p)$, an inequality $\DP'[i,j,\ell,p] \le \DP[i,j,\ell,p]$ holds.
  Moreover, if $S=\SCS(S_1,S_2)$, $|S|=\ell$, $\P(S_1,W_1[1 \dd i])=p \ge \fr$ and $\P(S_2,W_2[1 \dd j])=q \ge \fr$, then $\DP'[i,j,\ell,p] \ge q$.
  Thus, $\mathsf{Improved1}(W_1,W_2,z)=\WSCS(W_1,W_2,z)$.
\end{lemma}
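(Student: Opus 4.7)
The plan is to dispatch the three claims in sequence, with the bulk of the work on the lower bound. The inequality $\DP'[i,j,\ell,p] \le \DP[i,j,\ell,p]$ follows by a routine induction on states $(i,j,\ell,p)$ in lexicographic order: \cref{alg:improved1} applies the same max-update rule as \cref{alg:compute} but only reads from states $(i',j',\ell',p')$ with $\ell' \in L[i',j']$, treating all other entries of $\DP'$ as the default value $0$. By the inductive hypothesis, each such $\DP'[i',j',\ell',p'] \le \DP[i',j',\ell',p']$, so the restricted maximum is bounded by the unrestricted one.

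For the lower bound I would prove, by induction on $i+j$, the following slightly strengthened invariant: whenever $S = \SCS(S_1,S_2)$ with $|S|=\ell$, $\P(S_1, W_1[1\dd i])=p \ge \fr$, and $\P(S_2, W_2[1\dd j])=q \ge \fr$, the quadruple $(i,j,\ell,p)$ is among the states considered by \cref{alg:improved1} and $\DP'[i,j,\ell,p] \ge q$. The state-inclusion part follows directly from \cref{lem:gem}: since both matching probabilities are at least $\fr$, the lemma gives $|\ell - T[i,j]| \le 2\log_2 z$, i.e.\ $\ell \in L[i,j]$. The base case $i=j=0$ is handled by the initialization $\DP'[0,0,0,1]=1$.

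For the inductive step let $c := S[\ell]$. A minimality argument shows that $c = S_1[i]$ or $c = S_2[j]$ (otherwise $S[1\dd \ell-1]$ would already be a common supersequence of $S_1,S_2$, contradicting $|S|=|\SCS(S_1,S_2)|$). Consider the representative case $c = S_1[i] \ne S_2[j]$ and set $S_1' := S_1[1\dd i-1]$, $S_2' := S_2$, $S' := S[1\dd \ell-1]$. A key sub-claim is that $S' = \SCS(S_1', S_2')$: any strictly shorter common supersequence $S''$ of $S_1', S_2'$ would, via $S'' c$, yield a common supersequence of $S_1, S_2$ shorter than $S$, a contradiction. With $x := \pi_i^1(c)$, the probabilities $p/x \ge p \ge \fr$ and $q \ge \fr$ are preserved, so the inductive hypothesis applies at $(i-1, j, \ell-1, p/x)$ and gives $\DP'[i-1, j, \ell-1, p/x] \ge q$; the transition for letter $c$ in \cref{alg:improved1} then propagates this to $\DP'[i,j,\ell,p] \ge q$. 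The symmetric case $c = S_2[j] \ne S_1[i]$ uses the transition $y\cdot \DP'[i, j-1, \ell-1, p]$ with $q/y \ge q \ge \fr$, and the case $c = S_1[i] = S_2[j]$ uses $y \cdot \DP'[i-1, j-1, \ell-1, p/x]$ with $p/x, q/y \ge \fr$.

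The equality $\mathsf{Improved1}(W_1,W_2,z) = \WSCS(W_1,W_2,z)$ then follows by combining the two inequalities with \cref{obs:WSCS_SCS}: the optimum length $\ell^* := |\WSCS(W_1,W_2,z)|$ is realized by $\SCS(S_1, S_2)$ for some $S_1 \in \Match_z(W_1)$, $S_2 \in \Match_z(W_2)$, so the invariant certifies $\DP'[|W_1|, |W_2|, \ell^*, \P(S_1, W_1)] \ge \fr$ and the algorithm returns some $\ell \le \ell^*$; conversely, any returned $\ell$ satisfies $\DP[|W_1|, |W_2|, \ell, p] \ge \DP'[|W_1|, |W_2|, \ell, p] \ge \fr$, which by \cref{lem:corr1} witnesses a common supersequence of length $\ell$, hence $\ell \ge \ell^*$. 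I expect the main obstacle to be the sub-claim that the prefix $S'$ remains an SCS of the corresponding prefixes of $S_1, S_2$: this is what makes \cref{lem:gem} applicable at every recursion level and thus guarantees that the descent stays within the window $L[\cdot, \cdot]$.
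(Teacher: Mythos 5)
Your proposal is correct and follows essentially the same route as the paper's proof: induction on $i+j$, the case analysis on how $S[\ell]$ relates to $S_1[i]$ and $S_2[j]$, the key sub-claim that $S[1\dd\ell-1]$ remains an $\SCS$ of the truncated witnesses, and \cref{lem:gem} to guarantee $\ell\in L[i,j]$ at every level. The only nit is that for the sub-claim you argue minimality but omit the (easy) half showing that $S[1\dd\ell-1]$ is in fact a common supersequence of $S_1[1\dd i-1]$ and $S_2$, which uses $S[\ell]\ne S_2[j]$ exactly as in the paper.
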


\begin{proof}
  A simple induction on $i+j$ shows that the array $\DP'$ is lower bounded by $\DP$.
  This is because \cref{alg:improved1} is restricted to a subset of states considered by \cref{alg:compute},
  and because $\DP'[i,j,\ell,p]$ is assumed to be 0 while $\DP[i,j,\ell,p]\ge 0$ for states $(i,j,\ell,p)$ ignored in \cref{alg:improved1}.

  We prove the second part of the statement also by induction on $i+j$.
  The base cases satisfying $i=0$ or $j=0$ can be verified easily, so let us
  henceforth assume that $i>0$ and $j>0$. 

  First, consider the case that $S_1[i] = S[\ell] \ne S_2[j]$.
  Let $T=S[1\dd \ell-1]$ and $T_1=S_1[1\dd i-1]$.
  We then have
  \[p' := \P(T_1,W_1[1 \dd i-1]) = p/\pi_{i}^1(S_1[i])\]
  \begin{claim}
    If $S_1[i] = S[\ell] \ne S_2[j]$, then $T=\SCS(T_1,S_2)$.
  \end{claim}
  \begin{proof}
    Let us first show that $T$ is a common supersequence of $T_1$ and $S_2$.
    Indeed, if $T_1$ was not a subsequence of $T$, then $T_1 S_1[i] = S_1$ would not be a subsequence of $T S_1[i] = S$,
    and if $S_2$ was not a subsequence of $T$, then it would not be a subsequence of $T S_1[i] = S$ since $S_1[i] \ne S_2[j]$.

    Finally,  if $T_1$ and $S_2$ had a common supersequence $T'$ shorter than $T$,
    then $T' S_1[i]$ would be a common supersequence of $S_1$ and $S_2$ shorter than $S$.
  \end{proof}

  \noindent
  By the claim and the inductive hypothesis, $\DP'[i-1,j,\ell-1,p'] \ge q$.
  Hence, $\DP'[i,j,\ell,p] \ge q$ due to the presence of the second argument of the maximum in the dynamic programming algorithm for $c=S[\ell]$.
  Note that $(i,j,\ell,p)$ is a state in \cref{alg:improved1} since  $\ell\in L[i,j]$ follows from Lemma~\ref{lem:gem}.

  The cases that $S_1[i] \ne S[\ell] = S_2[j]$ and that $S_1[i] = S[\ell] = S_2[j]$ use the values
  $\DP'[i,j-1,\ell-1,p] \ge q\,/\,y$ and $\DP'[i-1,j-1,\ell-1,p\,/\,x] \ge q\,/\,y$, respectively.
  Finally, the case that $S_1[i] \ne S[\ell] \ne S_2[j]$ is impossible as $S=\SCS(S_1,S_2)$.
\end{proof}

\begin{example}
  Let $W_1=[1,0]$, $W_2=[0]$ (using the notation from Example~\ref{ex:1}), and $z \ge 1$.
  The only strings that match $W_1$ and $W_2$ are $S_1=\mathtt{ab}$ and $S_2=\mathtt{b}$, respectively.
  We have $\DP[2,1,3,1]=1$ which corresponds, in particular, to a solution $S=\mathtt{abb}$ which is not an \SCS of $S_1$ and $S_2$.
  However, $\DP[2,1,2,1]=\DP'[2,1,2,1]=1$ which corresponds to $S=\mathtt{ab}=\SCS(S_1,S_2)$.
\end{example}

\begin{proposition}\label{prop:3}
  The \WSCS problem can be solved in $\Oh(n^2z \log z)$ time if $|\Sigma|=\Oh(1)$.
\end{proposition}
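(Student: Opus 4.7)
The plan is to combine the correctness of \cref{alg:improved1} (already established in \cref{lem:corr2}) with a careful accounting of work across its three phases: (i) computing the heavy strings $H_1=\H(W_1)$ and $H_2=\H(W_2)$ and running the classical SCS dynamic programming on them to fill the table $T[i,j]$; (ii) constructing the frequency arrays $\Freq_i(W_1,z)$ (and $\Freq_j(W_2,z)$) via \cref{lemma:Pz}; and (iii) running the restricted dynamic programming over states $(i,j,\ell,p)$ with $\ell\in L[i,j]$. Phase (i) takes $\Oh(n)$ time to extract heavy strings (since $|\Sigma|=\Oh(1)$) and $\Oh(n^2)$ time for standard SCS. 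Phase (ii) takes $\Oh(nz)$ time by \cref{lemma:Pz}, and also yields the $\Oh(1)$-time pointers $p\mapsto p/\pi^{(W)}_i(c)$ needed to follow transitions in the last coordinate.

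The heart of the analysis is the state count for phase (iii). For every pair $(i,j)\in[0\dd |W_1|]\times[0\dd |W_2|]$, the restriction $\ell\in L[i,j]$ permits only $|L[i,j]|\le 4\lfloor\log_2 z\rfloor+1=\Oh(\log z)$ values of $\ell$, and for every such $\ell$ the coordinate $p$ ranges over $\Freq_i(W_1,z)$, which has size at most $z$ by \cref{obs:obv}. Hence the number of states examined by $\mathsf{Improved1}$ is $\Oh(n^2\cdot z\log z)$. Each state has $\Oh(|\Sigma|)=\Oh(1)$ outgoing transitions and, thanks to the $\Freq$ pointers, each transition can be resolved in constant time. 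The final sweep returning $\min\{\ell : \DP'[|W_1|,|W_2|,\ell,p]\ge\fr\}$ costs only $\Oh(z\log z)$ time. Summing everything gives the claimed $\Oh(n^2 z\log z)$ bound.

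The only real subtlety — more bookkeeping than obstacle — is storing $\DP'$ so that the lookups $\DP'[i-1,j,\ell-1,p/x]$, $\DP'[i,j-1,\ell-1,p]$, and $\DP'[i-1,j-1,\ell-1,p/x]$ required by the transitions are all $\Oh(1)$. This is handled by indexing states with three integer coordinates $(i,j,\delta)$, where $\delta=\ell-T[i,j]+\lfloor 2\log_2 z\rfloor\in[0\dd 4\lfloor 2\log_2 z\rfloor]$, together with the position of $p$ inside $\Freq_i(W_1,z)$; this produces a flat 4D array of total size $\Oh(n^2 z\log z)$ with $\Oh(1)$ random access. Since $|T[i,j]-T[i',j']|\le 2$ whenever $(i',j')\in\{(i-1,j),(i,j-1),(i-1,j-1)\}$, decrementing $\ell$ by one and shifting $\delta$ accordingly keeps the predecessor state within the allowed range (and states outside the range contribute the default value $0$, consistent with the convention used in \cref{lem:corr1,lem:corr2}). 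Combined with \cref{lem:corr2}, this yields the proposition.
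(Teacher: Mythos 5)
Your proof is correct and follows essentially the same route as the paper: correctness is delegated to \cref{lem:corr2}, and the running time comes from counting $\Oh(n^2)$ pairs $(i,j)$, $\Oh(\log z)$ values of $\ell$ in $L[i,j]$, and at most $z$ values of $p$ in $\Freq_i(W_1,z)$, with $\Oh(1)$ transitions per state. The extra material on precomputation costs and the offset-based indexing of $\DP'$ only spells out bookkeeping the paper leaves implicit (minor nit: $|L[i,j]|=2\lfloor 2\log_2 z\rfloor+1$, not $4\lfloor\log_2 z\rfloor+1$, but both are $\Oh(\log z)$).
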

\begin{proof}
  The correctness of the algorithm follows from Lemma~\ref{lem:corr2}.
  The number of states is now $\Oh(n^2 z\log z)$ and thus so is the number of considered transitions.
\end{proof}

\subsection{Second Improvement: Meet in the Middle}
The second improvement is to apply a meet-in-the-middle approach,
which is possible due to following observation resembling Observation~6.6 in~\cite{DBLP:journals/mst/KociumakaPR19}.

\begin{observation}\label{obs:sqrt}
  If $S \match{z} W$ for a string $S$ and weighted string $W$ of length $n$,
  then there exists a position $i \in [1\dd n]$ such that
  \[S[1 \dd i-1] \match{\sqrt{z}} W[1 \dd i-1]\quad\text{and}\quad S[i+1 \dd n] \match{\sqrt{z}} W[i+1 \dd n].\]
\end{observation}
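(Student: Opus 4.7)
The plan is to track how the matching probability accumulates from left to right along $S$. Setting $P_i = \prod_{j=1}^i \pi_j^{(W)}(S[j])$, so that $P_0 = 1$ and the hypothesis $S \match{z} W$ becomes $P_n \ge \fr$, the sequence $(P_i)_{i=0}^{n}$ is non-increasing because each factor lies in $[0,1]$. Note also that $P_n > 0$ forces $P_i > 0$ for every $i$, so the quotients that appear below are well defined.

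I would then take $i$ to be the largest index in $[1\dd n]$ for which $P_{i-1} \ge 1/\sqrt{z}$; such an $i$ exists because $P_0 = 1 \ge 1/\sqrt{z}$ (the inequality $z \ge 1$ is implicit in the hypothesis). The prefix half of the claim is then immediate:
\[\P(S[1\dd i-1],\, W[1\dd i-1]) \;=\; P_{i-1} \;\ge\; \tfrac{1}{\sqrt{z}}.\]

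For the suffix half I would split into two cases. If $i = n$, then $S[i+1\dd n]$ is empty and its matching probability equals $1 \ge 1/\sqrt{z}$. Otherwise $i < n$, and maximality of $i$ gives $P_i < 1/\sqrt{z}$; combining this with $P_n \ge \fr$ yields
\[\P(S[i+1\dd n],\, W[i+1\dd n]) \;=\; \frac{P_n}{P_i} \;>\; \sqrt{z}\cdot\frac{1}{z} \;=\; \frac{1}{\sqrt{z}},\]
as desired.

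I do not anticipate any genuine obstacle beyond bookkeeping: the argument is essentially a pigeonhole statement on the running product $(P_i)$, and the only subtlety is the boundary case $i = n$ (where the suffix degenerates to the empty string and matches with probability $1$ by the empty-product convention). In particular, the proof makes no use of the alphabet size or of the structure of the weights beyond the fact that each $\pi_j^{(W)}(S[j]) \in [0,1]$.
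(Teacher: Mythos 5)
Your proof is correct and follows the same approach as the paper, which simply selects $i$ as the maximum index with $S[1\dd i-1] \match{\sqrt{z}} W[1\dd i-1]$ and leaves the rest implicit. Your write-up just fills in the details (existence of $i$ via $P_0=1$, the maximality argument giving $P_i < 1/\sqrt{z}$ when $i<n$, and the degenerate case $i=n$), all of which are handled correctly.
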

\begin{proof}
  Select $i$ as the maximum index with $S[1 \dd i-1] \match{\sqrt{z}} W[1 \dd i-1]$.
\end{proof}

\noindent
We first use dynamic programming to compute two arrays, $\DPR$ and $\DPL$.
The array $\DPR$ contains a subset of states from $\DP'$; namely the ones that satisfy $p \ge \frac{1}{\sqrt{z}}$.
The array $\DPL$ is an analogous array defined for suffixes of $W_1$ and $W_2$.
Formally, we compute $\DPR$ for the reversals of $W_1$ and $W_2$, denoted as $\DPR^R$, and
set $\DPL[i,j,\ell,p] = \DPR^R[|W_1|+1-i,|W_2|+1-j,\ell,p]$.
Proposition~\ref{prop:3} yields

\begin{observation}
  Arrays $\DPR$ and $\DPL$ can be computed in $\Oh(n^2 \sqrt{z} \log z)$ time.
\end{observation}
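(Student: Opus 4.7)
The plan is to recognise that $\DPR$ is essentially what Algorithm $\mathsf{Improved1}$ would compute if invoked with the threshold $\tfrac{1}{\sqrt{z}}$ in place of $\tfrac{1}{z}$, modulo the cosmetic fact that $L[i,j]$ is still defined using $\log_2 z$ (which only inflates each range by a constant factor, since Lemma~\ref{lem:gem} applied with $\sqrt{z}$ in place of $z$ would give the tighter bound $\lfloor 2\log_2 \sqrt{z}\rfloor = \lfloor \log_2 z\rfloor$). With this observation, the claimed time bound is just Proposition~\ref{prop:3} applied to the instance with threshold $\tfrac{1}{\sqrt{z}}$.

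First I would verify that the restriction to states with $p \ge \tfrac{1}{\sqrt{z}}$ is closed under the transitions of Algorithm~\ref{alg:compute}. A state $(i,j,\ell,p)$ is updated from predecessors of the form $(i-1,j,\ell-1,p/x)$, $(i,j-1,\ell-1,p)$, and $(i-1,j-1,\ell-1,p/x)$, where $x=\pi_i^1(c)\le 1$. In all three cases, the first-coordinate probability of the predecessor is $\ge p \ge \tfrac{1}{\sqrt{z}}$, so the predecessor also lies in the restricted set. Moreover, the stored value $q$ in a predecessor equals the current $q$ or $q/y$ with $y=\pi_j^2(c)\le 1$, so enforcing $q \ge \tfrac{1}{\sqrt{z}}$ on the stored entries (and treating missing entries as $0$) is consistent with the recurrence. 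Hence the restricted DP can be run in isolation.

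Next I would count states and transitions. By Observation~\ref{obs:obv} applied with threshold $\tfrac{1}{\sqrt{z}}$, the set of admissible probabilities at index $i$ has size at most $\sqrt{z}$, and by Lemma~\ref{lemma:Pz} (again with $\sqrt{z}$ in place of $z$) the arrays $\Freq_i(W_1,\sqrt{z})$, together with their transition pointers, can be precomputed in $\Oh(n\sqrt{z})$ total time. The heavy-string table $T$ is built in $\Oh(n^2)$ time as in Algorithm~\ref{alg:improved1}, and $|L[i,j]|=\Oh(\log z)$. Multiplying gives $\Oh(n^2\sqrt{z}\log z)$ states in $\DPR$, each processed in $\Oh(1)$ time since $|\Sigma|=\Oh(1)$. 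This matches Proposition~\ref{prop:3} with $z\mapsto\sqrt{z}$.

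For $\DPL$, I would invoke the same construction on the reversals $W_1^R$ and $W_2^R$, and then reindex via $\DPL[i,j,\ell,p] = \DPR^R[|W_1|+1-i,|W_2|+1-j,\ell,p]$; this only doubles the running time. The only mild obstacle is the closure check above; once it is in place, the bound follows by a direct invocation of Proposition~\ref{prop:3} with the lowered threshold.
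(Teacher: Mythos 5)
Your argument is essentially the paper's own: the paper simply derives this observation from Proposition~\ref{prop:3}, i.e., it runs the $\mathsf{Improved1}$ dynamic programming with the $p$-dimension restricted to $\Freq_i(W_1,\sqrt{z})$ (size $\Oh(\sqrt{z})$ by Observation~\ref{obs:obv}) while keeping the $\Oh(\log z)$-sized ranges $\LR[i,j]$, and computes $\DPL$ by the same procedure on the reversed strings, so your state count and closure-under-transitions check for the $p$-restriction are exactly right. One caveat: drop the suggestion of also enforcing $q\ge\tfrac{1}{\sqrt{z}}$ on the stored values --- $\DPR$ is defined by restricting only the $p$-dimension, and the stored $q$-values must be kept untruncated (they can legitimately be far below $\tfrac{1}{\sqrt{z}}$, since in Lemma~\ref{lem:corr3} the $W_2$-side probabilities only satisfy $q_Lq_R\ge\fr$), so truncating them would compute a different array and break the subsequent \HS-based combination, even though it does not affect the stated time bound.
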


Henceforth, we consider only a simpler case in which there exists a solution $S$ to $\WSCS(W_1,W_2,z)$
with a decomposition $S=S_L\cdot S_R$ such that
\begin{equation}\label{eq:simple}
  W_1[1 \dd i] \subseq{\sqrt{z}} S_L \quad\text{and}\quad W_1[i+1 \dd |W_1|] \subseq{\sqrt{z}} S_R
\end{equation}
holds for some $i\in [0\dd |W_1|]$.
 
In the pseudocode, we use the array $L[i,j]$ from the first improvement, denoted here as $\LR[i,j]$, and a symmetric array $\LL$ from right to left, i.e.:
\begin{align*}
  \TT[i,j] &= \SCS(\H(W_1)[i \dd |W_1|],\H(W_2)[j \dd |W_2|]),\\
  \LL[i,j]&=[\TT[i,j] - \floor{2\log_2 z}\dd \TT[i,j] + \floor{2\log_2 z}].
\end{align*}
\cref{alg:improved2} is applied for every $i\in [0\dd |W_1|]$ and $j\in [0\dd |W_2|]$.

\begin{figure}[ht]
  \begin{center}
  \begin{minipage}{0.8\linewidth}
  \begin{algorithm}[H]
  \setstretch{1.3}
  \caption{$\mathsf{Improved2}(W_1,W_2,z,i,j)$}\label{alg:improved2}
  $\mathit{res}:=\infty$\;
  \ForEach{$\ell_L \in \LR[i,j]$, $\ell_R \in \LL[i+1,j+1]$}{
    $A:=\{(p,q)\,:\, \DPR[i,j,\ell_L,p]=q\}$\;
    $B:=\{(p,q)\,:\, \DPL[i+1,j+1,\ell_R,p]=q\}$\;
    \If{$\HS(A,B,z)$}{
      $\mathit{res}:=\min(\mathit{res},\ell_L+\ell_R)$\;
    }
  }
  \Return{$\mathit{res}$};
  \end{algorithm}
  \end{minipage}
\end{center}
\end{figure}

\begin{lemma}[Correctness of \cref{alg:improved2}]\label{lem:corr3}
  Assuming that there is a solution $S$ to $\WSCS(W_1,W_2,z)$ that satisfies \eqref{eq:simple},
 we have \[\WSCS(W_1,W_2,z)=\min_{i,j}(\mathsf{Improved2}(W_1,W_2,z,i,j)).\]
\end{lemma}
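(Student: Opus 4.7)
The plan is to establish the stated equality by two matching inequalities. For soundness, i.e., $\min_{i,j}\mathsf{Improved2}(W_1,W_2,z,i,j)\ge \WSCS(W_1,W_2,z)$, I fix any $(i,j)$ and suppose the algorithm returns $\ell_L+\ell_R$. The successful meet-in-the-middle check $\HS(A,B,z)$ then yields witnesses $(p_L,q_L)\in A$ and $(p_R,q_R)\in B$ with $p_Lp_R\ge\fr$ and $q_Lq_R\ge\fr$. Invoking the correctness of $\DPR$ (obtained from the argument of Lemma~\ref{lem:corr1} restricted to states with $p\ge 1/\sqrt{z}$), the entry $q_L=\DPR[i,j,\ell_L,p_L]$ furnishes a string $T_L$ of length $\ell_L$ whose two subsequences match $W_1[1\dd i]$ and $W_2[1\dd j]$ with probabilities $p_L$ and $q_L$; symmetrically $\DPL$ furnishes $T_R$. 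Concatenating, $T_L\cdot T_R$ is a common weighted supersequence of $W_1$ and $W_2$ of length $\ell_L+\ell_R$, whence $\WSCS(W_1,W_2,z)\le \ell_L+\ell_R$.

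For the reverse inequality (completeness), I take a shortest common weighted supersequence $S^*=S_L^*\cdot S_R^*$ satisfying \eqref{eq:simple} at some $i^*$. Fix a witnessing subsequence $S_1^*$ of $S^*$ for $W_1$; by \eqref{eq:simple} it splits as $S_1^*=S_{1,L}^*\cdot S_{1,R}^*$ whose halves match the corresponding halves of $W_1$ with probabilities $p_L^*,p_R^*\ge 1/\sqrt{z}$. Next, pick any witnessing subsequence $S_2^*$ for $W_2$ and let $j^*$ be the number of its letters embedded into $S_L^*$; this induces a split $S_2^*=S_{2,L}^*\cdot S_{2,R}^*$ with matching probabilities $q_L^*,q_R^*$ to $W_2[1\dd j^*]$ and $W_2[j^*+1\dd |W_2|]$ satisfying $q_L^*q_R^*\ge\fr$. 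By minimality of $|S^*|$, I may further assume $S_L^*=\SCS(S_{1,L}^*,S_{2,L}^*)$ and $S_R^*=\SCS(S_{1,R}^*,S_{2,R}^*)$, since otherwise shortening a half would produce a strictly shorter common weighted supersequence, contradicting optimality. Lemma~\ref{lem:gem} applied with $\sqrt{z}$ in place of $z$ then places $|S_L^*|\in\LR[i^*,j^*]$ and $|S_R^*|\in\LL[i^*+1,j^*+1]$, so these lengths are enumerated by the algorithm. The $\DPR$-analog of Lemma~\ref{lem:corr2}, applicable because $p_L^*\ge 1/\sqrt{z}$, places into $A$ a pair $(p_L^*,q)$ with $q\ge q_L^*$; symmetrically $B$ contains $(p_R^*,q')$ with $q'\ge q_R^*$. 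The products $p_L^*p_R^*\ge\fr$ and $qq'\ge\fr$ now make $\HS(A,B,z)$ succeed at $(\ell_L,\ell_R)=(|S_L^*|,|S_R^*|)$, yielding $\mathsf{Improved2}(W_1,W_2,z,i^*,j^*)\le |S^*|$.

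The main obstacle is the asymmetry in completeness: \eqref{eq:simple} enforces the $1/\sqrt{z}$-balanced split only for $W_1$, so the forced split $j^*$ for $W_2$ may yield one piece whose matching probability is strictly below $1/\sqrt{z}$. The observation that unblocks the argument is that $\DPR$ and $\DPL$ restrict only the state-indexing probability $p$ (attached to $W_1$) to be $\ge 1/\sqrt{z}$, whereas the $W_2$-probability $q$ is stored as a DP entry value and may be arbitrarily small; the meet-in-the-middle step combines the two $q$'s multiplicatively and only needs their product to reach $\fr$, which $q_L^*q_R^*\ge\fr$ guarantees.
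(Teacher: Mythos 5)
Your proof is correct and follows essentially the same route as the paper's: split an optimal solution via \eqref{eq:simple} together with the induced split of a $W_2$-witness at $j^*$, use minimality of $|S^*|$ to argue each half is an SCS of its two half-witnesses so that Lemma~\ref{lem:corr2} applies to $\DPR$ and $\DPL$ (the paper leaves this SCS step implicit), and close both directions with \HS and the attainability of the DP values. One small fix: you cannot apply Lemma~\ref{lem:gem} ``with $\sqrt{z}$ in place of $z$,'' because the $W_2$-side half-witnesses are only guaranteed probability at least $\fr$ (precisely the asymmetry you yourself flag); applying it with $z$, as in the proof of Lemma~\ref{lem:corr2}, still places $\ell_L$ and $\ell_R$ in the $\pm\floor{2\log_2 z}$ windows $\LR[i,j]$ and $\LL[i+1,j+1]$, so the rest of your argument goes through unchanged.
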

\begin{proof}
  Assume that $\WSCS(W_1,W_2,z)$ has a solution $S=S_L\cdot S_R$ that satisfies \eqref{eq:simple}
  for some $i\in [0\dd |W_1|]$ and denote $\ell_L=|S_L|$, $\ell_R=|S_R|$.
  Let $S'_L$ and $S'_R$ be subsequences of $S_L$ and $S_R$ such that
  \[p_L:=\P(S'_L,W_1[1 \dd i]) \ge \tfrac{1}{\sqrt{z}} \quad\text{and}\quad p_R:=\P(S'_R,W_1[i+1 \dd |W_1|]) \ge \tfrac{1}{\sqrt{z}}.\]
  Let $S''_L$ and $S''_R$ be subsequences of $S_L$ and $S_R$ such that
  \[\P(S''_L,W_2[1 \dd j])=q_L \quad\text{and}\quad \P(S''_R,W_2[j+1 \dd |W_2|])=q_R\]
  for some $j$ and $q_Lq_R \ge \fr$.

\smallskip
  By Lemma~\ref{lem:corr2}, $\DPR[i,j,\ell_L,p_L] \ge q_L$ and $\DPL[i+1,j+1,\ell_R,p_R] \ge q_R$.
  Hence, the set $A$ will contain a pair $(p_L,q'_L)$ such that $q'_L \ge q_L$ and the set $B$ will contain a pair $(p_R,q'_R)$ such that $q'_R \ge q_R$.
  Consequently, $\HS(A,B,z)$ will return a positive answer.

  Similarly, if $\HS(A,B,z)$ returns a positive answer for given $i$, $j$, $\ell_L$ and $\ell_R$, then
  \[\DPR[i,j,\ell_L,p_L] \ge q_L\quad\text{and}\quad\DPL[i+1,j+1,\ell_R,p_R] \ge q_R\]
 for some $p_Lp_R,q_Lq_R \ge \fr$.
  By Lemma~\ref{lem:corr2}, this implies that
  \[\WSCS'(W_1[1 \dd i],W_2[1 \dd j],\ell_L,p_L,q_L)\]
  and
  \[\WSCS'(W_1[i+1 \dd |W_1|],W_2[j+1 \dd |W_2|],\ell_R,p_R,q_R)\]
  have a positive answer, so
  \[\WSCS'(W_1,W_2,\ell_L+\ell_R,p_Lp_R,q_Lq_R)\]
  has a positive answer too.
  Due to $p_Lp_R,q_Lq_R \ge \fr$, this completes the proof.
%
\end{proof}

\begin{proposition}\label{prop:4}
  The \WSCS problem can be solved in $\Oh(n^2\sqrt{z} \log^2 z)$ time if $|\Sigma|=\Oh(1)$.
\end{proposition}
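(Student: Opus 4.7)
The plan is to combine the forward/backward preprocessing arrays $\DPR, \DPL$ with the per-split procedure $\mathsf{Improved2}$ invoked for every pair $(i,j) \in [0 \dd |W_1|] \times [0 \dd |W_2|]$, returning the minimum value encountered.

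First, I would compute $\DPR$ and $\DPL$. Applying \cref{prop:3} with the threshold $\sqrt{z}$ in place of $z$, both arrays can be populated in $\Oh(n^2 \sqrt{z} \log z)$ time, and in the same pass I would fill the auxiliary heavy-string SCS tables underlying $\LR$ and $\LL$ at cost $\Oh(n^2)$. I would also arrange for the last coordinate of $\DPR$ and $\DPL$ to be indexed through the sorted arrays $\Freq_i(W_k, \sqrt{z})$, so that the $p$-entries associated with any fixed triple $(i,j,\ell)$ are kept in increasing order of $p$.

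Second, I would bound the cost of a single call $\mathsf{Improved2}(W_1, W_2, z, i, j)$. Its double loop performs $|\LR[i,j]| \cdot |\LL[i+1,j+1]| = \Oh(\log^2 z)$ iterations, as each of the two ranges has length $\Oh(\log z)$ by construction. In a given iteration I need to produce the sets $A$ and $B$ sorted by first coordinate, and then invoke $\HS(A,B,z)$. By \cref{obs:obv} applied with threshold $\sqrt{z}$ we have $|\Freq_i(W_k,\sqrt{z})| \le \sqrt{z}$, so $|A|, |B| \le \sqrt{z}$; moreover, since the $\Freq$ arrays are already stored in increasing order of $p$, the entries of $\DPR[i,j,\ell_L,\cdot]$ and $\DPL[i+1,j+1,\ell_R,\cdot]$ stream out $A$ and $B$ sorted by their first coordinate in $\Oh(\sqrt{z})$ time. \cref{lem:middle} then resolves $\HS$ in a further $\Oh(\sqrt{z})$ time. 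Hence a single call to $\mathsf{Improved2}$ runs in $\Oh(\sqrt{z} \log^2 z)$ time, and summing over the $\Oh(n^2)$ pairs $(i,j)$ gives $\Oh(n^2 \sqrt{z} \log^2 z)$, which dominates the preprocessing cost.

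Third, correctness is inherited from \cref{lem:corr3}: under the standing assumption that some optimal solution $S$ admits the decomposition in~\eqref{eq:simple}, the minimum of $\mathsf{Improved2}(W_1, W_2, z, i, j)$ over all $(i, j)$ equals $\WSCS(W_1, W_2, z)$.

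The main obstacle is avoiding an additional $\log z$ factor in the inner loop: sorting $A$ and $B$ on the fly would inflate the bound to $\Oh(n^2 \sqrt{z} \log^3 z)$. The fix is to inherit the sorted layout directly from $\Freq_i(W_k, \sqrt{z})$ so that no re-sorting is ever performed, relying on the fact that \cref{lem:middle} only needs the inputs sorted by first coordinate. A secondary caveat is that \cref{lem:corr3} only covers the case where an optimal $S$ admits a qualifying decomposition; the complementary case must be treated separately, but the argument is symmetric and does not affect the complexity bound asserted here.
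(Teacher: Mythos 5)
Your accounting of the running time matches the paper's: $\Oh(\log^2 z)$ \HS instances per pair $(i,j)$, each of size $\Oh(\sqrt{z})$ and solvable in linear time by \cref{lem:middle} because the $\Freq$ arrays keep the $p$-coordinates sorted, on top of the $\Oh(n^2\sqrt{z}\log z)$ preprocessing of $\DPR$ and $\DPL$. The genuine gap is your last sentence. \cref{lem:corr3} only applies when some optimal solution satisfies \eqref{eq:simple}, i.e.\ when $W_1$ can be split so that \emph{both} halves match with probability at least $\tfrac{1}{\sqrt{z}}$. \cref{obs:sqrt} does not guarantee this: it only yields a position $i$ whose own letter is excluded from both halves, and this position can carry probability as low as $\fr$, in which case \emph{no} split of $S$ satisfies \eqref{eq:simple} (e.g.\ when one position of $W_1$ contributes probability close to $\fr$ and all others close to $1$). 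This complementary case is not ``symmetric'' to the simple one and is not covered by exchanging left/right or $W_1/W_2$: neither the prefix array nor the suffix array built with threshold $\sqrt{z}$ contains a state for the half that absorbs the low-probability middle position. The paper's proof closes exactly this hole by augmenting $\DPR$ with all states $(i,j,\ell,p)$ such that $\ell\in\LR[i,j]$, $p\ge\fr$, and $p=\pi^1_i(c)\,p'$ for some $c\in\Sigma$ and $p'\in\Freq_{i-1}(W_1,\sqrt{z})$ — i.e.\ the prefix probability is allowed to drop below $\tfrac{1}{\sqrt{z}}$ only at its very last position — and notes that, since $|\Sigma|=\Oh(1)$, this adds only $\Oh(n^2\sqrt{z}\log z)$ states, so the bound is unaffected. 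Without this (or an equivalent device, such as explicitly guessing the middle character), the algorithm you describe can return a suboptimal length, so the proposition is not yet proved.

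A secondary caution: ``applying \cref{prop:3} with threshold $\sqrt{z}$ in place of $z$'' should not shrink the length intervals. In the paper, $\DPR$ is the subset of the states of $\DP'$ (whose $\ell$-ranges $\LR[i,j]$ have radius $\lfloor 2\log_2 z\rfloor$, tied to $z$) with only the $p$-dimension restricted to $\Freq_i(W_1,\sqrt{z})$. This matters because the $W_2$-side probabilities $q_L,q_R$ are only guaranteed to be at least $\fr$, not $\tfrac{1}{\sqrt{z}}$, so the relevant lengths can deviate from the heavy-string SCS values by more than $2\log_2\sqrt{z}$; populating the tables only within radius $\log_2 z$ of $T[i,j]$ could miss the optimal $\ell_L,\ell_R$ that your own loops over $\LR[i,j]$ and $\LL[i+1,j+1]$ later query.
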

\begin{proof}
  We use the algorithm $\mathsf{Improved2}$, whose correctness follows from Lemma~\ref{lem:corr3} in case \eqref{eq:simple} is satisfied. 
The general case of Observation~\ref{obs:sqrt} requires only a minor technical change to the algorithm.
Namely, the computation of $\DPR$ then additionally includes all states $(i,j,\ell,p)$ such that $\ell\in \LR[i,j]$,
$p\ge \fr$, and $p=\pi^1_i(c)p'$ for some $c\in \Sigma$ and $p'\in \Freq_{i-1}(W_1,\sqrt{z})$.
Due to $|\Sigma| = \Oh(1)$, the number of such states is still $\Oh(n^2 \sqrt{z}\log z)$.

For every $i$ and $j$, the algorithm solves $\Oh(\log^2 z)$ instances of \HS, each of size $\Oh(\sqrt{z})$.
This results in the total running time of $\Oh(n^2\sqrt{z} \log^2 z)$.
\end{proof}

\subsection{Third Improvement: Removing one $\log z$ Factor}
The final improvement is obtained by a structural transformation after which we only need to consider $\Oh(\log z)$ pairs $(\ell_L,\ell_R)$.

For this to be possible, we compute prefix maxima on the $\ell$-dimension of the $\DPR$ and $\DPL$ arrays in order to guarantee monotonicity.
That is, if $\HS(A,B,z)$ returns true for $\ell_L$ and $\ell_R$, then we make sure that it would also return true if any of these two lengths increased (within the corresponding intervals).

\begin{algorithm}[t!]
  \setstretch{1.25}
  \SetKwComment{Comment}{$\triangleright$\ }{}
  \caption{$\mathsf{Improved3}(W_1,W_2,z,i,j)$}\label{alg:improved3}
  \ForEach{state $(i,j,\ell,p)$ of $\DPR$ in lexicographic order}{
    $\DPR[i,j,\ell,p]:=\max(\DPR[i,j,\ell,p],\DPR[i,j,\ell-1,p])$\;
  }
  \ForEach{state $(i,j,\ell,p)$ of $\DPL$ in lexicographic order}{
    $\DPL[i,j,\ell,p]:=\max(\DPL[i,j,\ell,p],\DPL[i,j,\ell-1,p])$\;
  }
  $[a\dd b] := \LR[i,j]$; $[a'\dd b']:=\LL[i+1,j+1]$\;
  $\ell_L := a$; $\ell_R := b' + 1$; $\mathit{res}:=\infty$\;
  \While{$\ell_L \le b$ \KwSty{and} $\ell_R \ge a'$}{
    $A:=\{(p,q)\,:\, \DPR[i,j,\ell_L,p]=q\}$\;
    $B:=\{(p,q)\,:\, \DPL[i+1,j+1,\ell_R - 1,p]=q\}$\;
    \If(\Comment*[f]{$\ell_R$ is too large for the current $\ell_L$}){$\HS(A,B,z)$}{
      $\ell_R:=\ell_R - 1$\;
    }\Else(\Comment*[f]{$\ell_R$ reached the target value for the current $\ell_L  $}){
      \lIf{$\ell_R \le  b'$}{
        $\mathit{res}:=\min(\mathit{res},\ell_L+\ell_R)$}
      $\ell_L:=\ell_L+1$\;
    }
  }
  \Return{$\mathit{res}$};
  \end{algorithm}

  This lets us compute, for every $\ell_L\in \LR[i,j]$ the smallest $\ell_R\in \LL[i,j]$ such that
  $\HS(A,B,z)$ returns true using $\Oh(\log z)$ iterations because the sought $\ell_R$ may only decrease
  as $\ell_L$ increases. 
  The pseudocode is given in \cref{alg:improved3}.

\begin{theorem}\label{thm:main}
  The \WSCS problem can be solved in $\Oh(n^2\sqrt{z} \log z)$ time
  if $|\Sigma|=\Oh(1)$.
\end{theorem}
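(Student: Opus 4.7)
The plan is to analyze Algorithm $\mathsf{Improved3}$ and verify that it computes $\WSCS(W_1,W_2,z)$ within the claimed bound. I will build on the structure of Proposition~\ref{prop:4}: apply a prefix-maximum preprocessing to the $\ell$-coordinate of $\DPR$ and $\DPL$, use the resulting monotonicity to replace the $\Oh(\log^2 z)$-sized nested loop over $(\ell_L,\ell_R)$ from Improved2 by a two-pointer sweep of total length $\Oh(\log z)$, and leave all other ingredients (construction of $\DPR$, $\DPL$, and the outer loop over splits $(i,j)$) unchanged.

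The first step is to make precise what the prefix-max pass achieves. After the two initial for-loops, the updated entry $\DPR[i,j,\ell,p]$ equals the maximum of the original values $\DPR[i,j,\ell',p]$ over all $\ell'\in\LR[i,j]$ with $\ell'\le\ell$ (symmetrically for $\DPL$). From this I would deduce the monotonicity needed by the sweep: for every $\ell_L'\ge\ell_L$ and every $(p,q)\in A(\ell_L)$, the set $A(\ell_L')$ contains a pair $(p,q')$ with $q'\ge q$, and analogously for $B$. Consequently $\HS(A(\ell_L),B(\ell_R),z)$ is monotone in both arguments. Crucially, whenever such a call returns true, the witnessing pairs still correspond to an honest common supersequence of total length at most $\ell_L+\ell_R$, because the witnesses are realized by some original $\DPR$/$\DPL$ entries at $\ell_L^*\le\ell_L$ and $\ell_R^*\le\ell_R$, both within the Hamming-distance window guaranteed by Lemma~\ref{lem:corr3}.

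Next, I would argue that the two-pointer loop finds the exact minimum $\ell_L+\ell_R$ for each split $(i,j)$. Define $f(\ell_L)$ as the smallest $\ell_R\in\LL[i+1,j+1]$ for which $\HS(A(\ell_L),B(\ell_R),z)=\mathbf{true}$, or $b'+1$ if no such $\ell_R$ exists; monotonicity forces $f$ to be non-increasing. The sweep alternately decrements $\ell_R$ (when $\HS$ is still true at $\ell_R-1$) and increments $\ell_L$ (when it is not), thereby enumerating precisely the pairs $(\ell_L,f(\ell_L))$ with $f(\ell_L)\le b'$. Combining the fact that any true $\HS$ answer certifies a valid decomposition with the observation that the true optimum $(\ell_L^*,\ell_R^*)$ satisfies $f(\ell_L^*)\le\ell_R^*$, the minimum of $\ell_L+\ell_R$ returned is exactly the WSCS length restricted to the split $(i,j)$; minimizing over $(i,j)$, together with the same minor extension to the general case of Observation~\ref{obs:sqrt} already used in Proposition~\ref{prop:4}, yields the final answer.

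For the running time, constructing $\DPR$ and $\DPL$ takes $\Oh(n^2\sqrt{z}\log z)$ time by the analysis of Proposition~\ref{prop:3} applied at threshold $\sqrt{z}$, and the prefix-maximum pass touches each state exactly once within the same budget. For each of the $\Oh(n^2)$ pairs $(i,j)$, the while loop performs $\Oh(|\LR[i,j]|+|\LL[i+1,j+1]|)=\Oh(\log z)$ iterations; each iteration calls $\HS$ on sets of size $\Oh(\sqrt{z})$, which by Lemma~\ref{lem:middle} runs in $\Oh(\sqrt{z})$ time provided the pairs are kept sorted by first coordinate (easy to maintain alongside $\Freq_i$). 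Summing over $(i,j)$ yields $\Oh(n^2\sqrt{z}\log z)$. The trickiest point to verify carefully is the combination of monotonicity and semantics in the second paragraph: one must check that the prefix-max trick does not degrade a feasible $\HS$ answer at $(\ell_L,\ell_R)$ into a mere upper bound, so that the value returned by the sweep is truly the minimum achievable length for the split.
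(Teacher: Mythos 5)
Your proposal is correct and follows essentially the same route as the paper: prefix maxima over the $\ell$-coordinate of $\DPR$ and $\DPL$ to enforce monotonicity, a two-pointer sweep yielding only $\Oh(\log z)$ calls to \HS of size $\Oh(\sqrt{z})$ per pair $(i,j)$, and the same handling of the general case of Observation~\ref{obs:sqrt} as in Proposition~\ref{prop:4}. In fact, your write-up spells out the correctness of the sweep (monotonicity of $f$, that prefix-maxed witnesses still certify achievable lengths, and that the optimum lies on the enumerated staircase) in more detail than the paper's own proof, which only argues the time bound.
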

\begin{proof}
  Let us fix indices $i$ and $j$.
  Let us denote $\Freq_i(W,z)$ by $\RFreq_i(W,z)$ and introduce a symmetric array
  \[\LFreq_i(W,z)=\{\P(S,W[i \dd |W|])\,:\,S \in \Match_z(W[i \dd |W|])\}.\]
  In the first loop of prefix maxima computation, we consider all $\ell \in \LR[i,j]$ and $p \in \RFreq_i(W_1,\sqrt{z})$,
  and in the second loop, all $\ell \in \LL[i,j]$ and $p \in \LFreq_{i}(W_1,\sqrt{z})$.
  Hence, prefix maxima take $\Oh(\sqrt{z}\log{z})$ time to compute.

  Each step of the while-loop in $\mathsf{Improved3}$ increases $\ell_L$ or decreases $\ell_R$.
  Hence, the algorithm produces only $\Oh(\log z)$ instances of \HS, each of size $\Oh(\sqrt{z})$.
  The time complexity follows.
\end{proof}

\section{Lower Bound for \WLCS}
Let us first define the \WLCS problem 
as it was stated in \cite{DBLP:journals/jda/AmirGS10,DBLP:journals/dam/CyganKRRW16}.

\defproblem{\WLCSFull ($\WLCS(W_1,W_2,z)$)}{
  Weighted strings $W_1$ and $W_2$ of length up to $n$ and a threshold $\fr$.
}{
  A longest standard string $S$ such that $S\subseq{z} W_1$ and $S\subseq{z}W_2$.
}

\medskip
We consider the following well-known NP-complete problem~\cite{DBLP:conf/coco/Karp72}:

\defproblem{ Subset Sum }{
  A set $S$ of positive integers and a positive integer $t$. 
}{
  Is there a subset of $S$ whose elements sum up to $t$?
}
\begin{theorem}\label{WLCS hardness}
 The \WLCS problem cannot be solved in $\Oh(n^{f(z)})$ time if $\mathrm{P}\ne\mathrm{NP}$.
\end{theorem}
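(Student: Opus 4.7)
My plan is to show NP-hardness of \WLCS for some constant threshold $z$: if that holds, any $\Oh(n^{f(z)})$-time algorithm would, for that constant $z$, run in polynomial time, contradicting $\mathrm{P}\ne\mathrm{NP}$. Concretely, I will reduce \SubsetSum to \WLCS with $z=2$.

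Given an instance $(s_1,\ldots,s_n,t_0)$ of \SubsetSum, I first apply polynomial-size padding producing a new instance $(s'_1,\ldots,s'_N,t)$ together with a chosen subset size $k$ such that: (i) both $t$ and $k$ are powers of two (so that the rational exponents I use fit the model's $c^{p/2^{dw}}$ probability format); (ii) $s'_i \le 2t/k$ for every item (so the probabilities I use never exceed $1$); and (iii) the original instance is a yes-instance iff the padded instance admits a subset of size exactly $k$ summing to $t$. The padding combines three routine tricks: a forced-use item to round $t_0$ up to a power of two, zero-valued dummies to round $k$ up and to convert ``subset of arbitrary size'' into ``subset of size exactly $k$'', and a global inflation $s_i\mapsto s_i+D$ with matching target shift $t\mapsto t+kD$ that bounds all items below $2t/k$.

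The core construction uses an alphabet with one distinct letter $c_i$ per padded item together with two special symbols $f_1, f_2$ used only as fillers. I set $W_1[i]$ to contain $c_i$ with probability $2^{-s'_i/t}$ and $f_1$ with the remaining probability, and $W_2[i]$ to contain $c_i$ with probability $2^{-(2t/k - s'_i)/t}$ and $f_2$ with the remaining probability; the threshold is $\fr=\tfrac{1}{2}$. Because $f_1$ has probability $0$ in $W_2$ and $f_2$ has probability $0$ in $W_1$, any $S$ with $S\subseq{z} W_1$ and $S\subseq{z} W_2$ must use only letters $c_i$; and because each $c_i$ has nonzero probability at a single position in each $W_j$, the matched subsequences of $W_1$ and $W_2$ are indexed by the same set $I\subseteq[N]$. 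A direct computation rewrites the two probability conditions as $T\le t$ and $T\ge (2m/k-1)\,t$, where $T=\sum_{i\in I}s'_i$ and $m=|S|$. The resulting interval $[(2m/k-1)t,\,t]$ is empty for $m>k$, collapses to the singleton $\{t\}$ for $m=k$, and properly contains $t$ for $m<k$. Consequently $\WLCS(W_1,W_2,z)=k$ iff the padded instance admits a size-$k$ subset summing to $t$, which by (iii) is equivalent to a yes-answer for the original \SubsetSum.

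The main technical obstacle I foresee is the arithmetic bookkeeping of the padding: simultaneously securing the power-of-two conditions on $t$ and $k$, the non-negativity of $2t/k-s'_i$, and the preservation of yes-instances. Each of these requirements is handled by one of the routine padding tricks listed above, with only polynomial overhead. Given the reduction, any $\Oh(n^{f(z)})$-time algorithm for \WLCS would solve each produced instance in $\Oh(n^{f(2)})=\mathrm{poly}(n)$ time, yielding a polynomial-time algorithm for \SubsetSum and hence $\mathrm{P}=\mathrm{NP}$.
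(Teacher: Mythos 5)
Your reduction is sound and it does prove the theorem as literally stated, but it takes a genuinely different route from the paper's. The paper keeps the alphabet binary: the $n$ ``item'' positions are spaced $n$ apart inside strings of length $n^2$, so that (with $z=2$ and all item probabilities above $\tfrac12$) every common weighted subsequence is a subsequence of $(\mathtt{b}^{n-1}\mathtt{a})^n$, and a solution of length at least $n(n-1)+p$ forces the $i$-th item of $W_1$ to be matched with the $i$-th item of $W_2$; moreover, instead of your exact-cardinality padding, the paper enumerates a cardinality parameter $p\in[2\dd n]$, builds one \WLCS instance per $p$ with $W_2$'s probabilities $2^{(s_i/t-1)/(p-1)}$, and uses the length bound to force $|I|\ge p$, which together with the two probability constraints pins $\sum_{i\in I}s_i=t$. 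You avoid both the spacing argument and the enumeration by giving each item its own letter (plus two fillers with zero cross-probability) and by passing to the exact-cardinality variant of \SubsetSum; your interval computation $(2m/k-1)t\le T\le t$ and the case analysis $m>k$, $m=k$, $m<k$ are correct, and you rightly put the exact-size conversion before the inflation (inflation preserves only fixed-cardinality sums). What you give up is alphabet size: your alphabet has $\Theta(n)$ letters, so you do not obtain the binary-alphabet hardness that the paper's construction yields, and that stronger form is what makes the contrast with the paper's constant-alphabet \WSCS algorithm meaningful. Two smaller points: the power-of-two conditions on $t$ and $k$ are extra caution about the log-probability representation that the paper itself does not impose (its exponents $-s_i/t$ and $(s_i/t-1)/(p-1)$ are not dyadic either); and if you do want them, note that rounding $t$ to a power of two \emph{before} the inflation is undone by the shift $t\mapsto t+kD$ --- the clean fix is to scale all items and the target by $k$ and then choose $D=2^{c}-t$ for a sufficiently large $c$, rather than using a forced-use item.
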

\begin{proof}
We show the hardness result by reducing the NP-complete \textsc{Subset Sum} problem to the \WLCS problem with a constant value of $z$.

For a set $S=\{s_1,s_2,\ldots,s_n\}$ of $n$ positive integers, a positive integer $t$, and an additional parameter $p\in [2\dd n]$,
we construct two weighted strings $W_1$ and $W_2$ over the alphabet $\Sigma=\{\mathtt{a},\mathtt{b}\}$, each of length $n^2$.

Let $q_i=\frac{s_i}{t}$.  
At positions $i\cdot n$, for all $i=[1\dd n]$, the weighted string $W_1$ contains letter $\mathtt{a}$ with probability $2^{-q_i}$
and $\mathtt{b}$ otherwise, while $W_2$ contains $\mathtt{a}$ with probability $2^{\frac{1}{p-1}(q_i-1)}$ and $\mathtt{b}$ otherwise. 
All the other positions contain letter $\mathtt{b}$ with probability $1$. We set $z = 2$.
 
We assume that $S$ contains only elements smaller than $t$ (we can ignore the larger ones and if there is an element equal to $t$, then there is no need for a reduction). 
All the weights of $\mathtt{a}$ are then in the interval $(\frac{1}{2},1)$ since $-q_i\in (-1,0)$ and $\frac{1}{p-1}(q_i-1) \in (-1,0)$.
Thus, since $z=2$, letter $\mathtt{b}$ originating from a position $i\cdot n$ can never occur in a subsequence of $W_1$ or in a subsequence of $W_2$. 
Hence, every common subsequence of $W_1$ and $W_2$ is a subsequence of $(\mathtt{b}^{n-1}\mathtt{a})^n$.
 
For $I\subseteq[1\dd n]$, we have
\begin{align*}
  \prod_{i\in I}\pi^{(W_1)}_{i\cdot n}(\mathtt{a}) = \prod_{i \in I} 2^{-s_i/t} \ge2^{-1} =\tfrac{1}{z}\ \Longleftrightarrow&\ \sum_{i\in I}s_i\le t\\ 
  \text{and}\hspace{8cm}\\
\prod_{i\in I}\pi^{(W_2)}_{i\cdot n}(\mathtt{a}) = \prod_{i \in I}2^{\frac{1}{p-1}(s_i/t-1)}\,\ge \,2^{-1}=\tfrac{1}{z}\ \Longleftrightarrow&\\
\tfrac{1}{t(p-1)}\left(\sum_{i\in I}s_i\right)-\tfrac{|I|}{p-1}\,\ge-1\ \Longleftrightarrow&\ \sum_{i\in I}s_i\ge t(1-p+|I|).
\end{align*}
 
 If $I$ is a solution to the instance of the \textsc{Subset Sum} problem, then for $p=|I|$ there is a weighted common subsequence of length $n(n-1)+p$ 
 obtained by choosing all the letters $\mathtt{b}$ and the letters $\mathtt{a}$ that correspond to the elements of $I$.

 Conversely, suppose that the constructed \WLCS instance with a parameter $p\in [2\dd n]$ has a solution of length at least $n(n-1)+p$.
 Notice that $\mathtt{a}$ at position $i\cdot n$ in $W_1$
 may be matched against $\mathtt{a}$ at position $i'\cdot n$ in $W_2$ only if $i=i'$.
 (Otherwise, the length of the subsequence would be at most $(n-|i-i'|)n\le (n-1)n<n(n-1)+p$.)
 Consequently, the solution yields a subset $I\subseteq [1\dd n]$ of at least $p$ indices
 $i$ such that $\mathtt{a}$ at position $i\cdot n$ in $W_1$ is matched against 
 $\mathtt{a}$ at position $i\cdot n$ in $W_2$.
 By the relations above, we have (a) $|I| \geq p$, (b) $\sum_{i\in I}s_i\le t$, and (c) $\sum_{i\in I}s_i\ge t(1-p+|I|)$.
 Combining these three inequalities, we obtain $\sum_{i\in I}s_i= t$ and conclude that 
 the \textsc{Subset Sum} instance has a solution.
 
 Hence, the \textsc{Subset Sum} instance has a solution if and only if there exists $p\in[2\dd n]$ such that the constructed \WLCS  instance with $p$ has a solution of length at least $n(n-1)+p$.
 This concludes that an $\Oh(n^{f(z)})$-time algorithm for the \WLCS problem implies the existence of an $\Oh(n^{2f(2)+1})=\Oh(n^{\Oh(1)})$-time algorithm for the \textsc{Subset Sum} problem. The latter would yield $P=NP$.
\end{proof}

\begin{example}
 For $S=\{3,7,11,15,21\}$ and $t=25=3+7+15$, both weighted strings $W_1$ and $W_2$ 
 are of the form:
\[\mathtt{b^4\,*\,b^4\,*\,b^4\,*\,b^4\,*\,b^4\,*}\,,\]
 where each $\mathtt{*}$ is equal to either $\mathtt{a}$ or $\mathtt{b}$ with different probabilities.

The probabilities of choosing $\mathtt{a}$'s for $W_1$ are equal respectively to 
\[\big(2^{-\frac{3}{25}},2^{-\frac{7}{25}},2^{-\frac{11}{25}},2^{-\frac{15}{25}},2^{-\frac{21}{25}}\big),\]
 while for $W_2$ they depend on the value of $p$, and are equal respectively to
 \[\big(2^{-\frac{22}{25(p-1)}},2^{-\frac{18}{25(p-1)}},2^{-\frac{14}{25(p-1)}},2^{-\frac{10}{25(p-1)}},2^{-\frac{4}{25(p-1)}}\big).\]
For $p=3$, we have: $\WLCS(W_1,W_2,2)\,=\,\mathtt{b^4\,a\,b^4\,a\,b^4\,b^4\,a\,b^4}$, which corresponds to taking the first, the second, and the fourth $\mathtt{a}$. 
The length of this string is equal to $23=n(n-1)+p$, 
and its probability of matching is $\frac12 =  2^{-\frac{22}{50}} \cdot 2^{-\frac{18}{50}} \cdot 2^{-\frac{10}{50}}$.
Thus, the subset $\{3,7,15\}$ of $S$ consisting of its first, second, and fourth element 
is a solution to the \textsc{Subset Sum} problem.
\end{example}

\bibliographystyle{plainurl}
\bibliography{wscs}

\begin{thebibliography}{10}

\bibitem{DBLP:conf/focs/AbboudBW15}
Amir Abboud, Arturs Backurs, and Virginia~Vassilevska Williams.
\newblock Tight hardness results for {LCS} and other sequence similarity
  measures.
\newblock In Venkatesan Guruswami, editor, {\em 56th {IEEE} Annual Symposium on
  Foundations of Computer Science}, pages 59--78. {IEEE} Computer Society,
  2015.
\newblock \href {http://dx.doi.org/10.1109/FOCS.2015.14}
  {\path{doi:10.1109/FOCS.2015.14}}.

\bibitem{DBLP:journals/tkde/AggarwalY09}
Charu~C. Aggarwal and Philip~S. Yu.
\newblock A survey of uncertain data algorithms and applications.
\newblock {\em {IEEE} Transactions on Knowledge and Data Engineering},
  21(5):609--623, 2009.
\newblock \href {http://dx.doi.org/10.1109/TKDE.2008.190}
  {\path{doi:10.1109/TKDE.2008.190}}.

\bibitem{DBLP:journals/tcs/AmirCIKZ08}
Amihood Amir, Eran Chencinski, Costas~S. Iliopoulos, Tsvi Kopelowitz, and Hui
  Zhang.
\newblock Property matching and weighted matching.
\newblock {\em Theoretical Computer Science}, 395(2--3):298--310, 2008.
\newblock \href {http://dx.doi.org/10.1016/j.tcs.2008.01.006}
  {\path{doi:10.1016/j.tcs.2008.01.006}}.

\bibitem{DBLP:journals/jda/AmirGS10}
Amihood Amir, Zvi Gotthilf, and B.~Riva Shalom.
\newblock Weighted {LCS}.
\newblock {\em Journal of Discrete Algorithms}, 8(3):273--281, 2010.
\newblock \href {http://dx.doi.org/10.1016/j.jda.2010.02.001}
  {\path{doi:10.1016/j.jda.2010.02.001}}.

\bibitem{DBLP:conf/spire/AmirGS11}
Amihood Amir, Zvi Gotthilf, and B.~Riva Shalom.
\newblock Weighted shortest common supersequence.
\newblock In Roberto Grossi, Fabrizio Sebastiani, and Fabrizio Silvestri,
  editors, {\em 18th International Symposium on String Processing and
  Information Retrieval, {SPIRE} 2011}, volume 7024 of {\em LNCS}, pages
  44--54. Springer, 2011.
\newblock \href {http://dx.doi.org/10.1007/978-3-642-24583-1\_6}
  {\path{doi:10.1007/978-3-642-24583-1\_6}}.

\bibitem{DBLP:journals/siamcomp/BansalGN018}
Nikhil Bansal, Shashwat Garg, Jesper Nederlof, and Nikhil Vyas.
\newblock Faster space-efficient algorithms for subset sum, $k$-sum, and
  related problems.
\newblock {\em {SIAM} Journal on Computing}, 47(5):1755--1777, 2018.
\newblock \href {http://dx.doi.org/10.1137/17M1158203}
  {\path{doi:10.1137/17M1158203}}.

\bibitem{zstrings}
Carl Barton, Tomasz Kociumaka, Chang Liu, Solon~P. Pissis, and Jakub
  Radoszewski.
\newblock Indexing weighted sequences: Neat and efficient, 2017.
\newblock URL: \url{https://arxiv.org/abs/1704.07625}, \href
  {http://arxiv.org/abs/1704.07625} {\path{arXiv:1704.07625}}.

\bibitem{DBLP:conf/cpm/BartonKPR16}
Carl Barton, Tomasz Kociumaka, Solon~P. Pissis, and Jakub Radoszewski.
\newblock Efficient index for weighted sequences.
\newblock In Roberto Grossi and Moshe Lewenstein, editors, {\em 27th Annual
  Symposium on Combinatorial Pattern Matching, {CPM} 2016}, volume~54 of {\em
  LIPIcs}, pages 4:1--4:13. Schloss Dagstuhl--Leibniz-Zentrum für Informatik,
  2016.
\newblock \href {http://dx.doi.org/10.4230/LIPIcs.CPM.2016.4}
  {\path{doi:10.4230/LIPIcs.CPM.2016.4}}.

\bibitem{DBLP:journals/tcs/BartonLP16}
Carl Barton, Chang Liu, and Solon~P. Pissis.
\newblock Linear-time computation of prefix table for weighted strings {\&}
  applications.
\newblock {\em Theoretical Computer Science}, 656:160--172, 2016.
\newblock \href {http://dx.doi.org/10.1016/j.tcs.2016.04.029}
  {\path{doi:10.1016/j.tcs.2016.04.029}}.

\bibitem{DBLP:journals/algorithmica/BartonP18}
Carl Barton and Solon~P. Pissis.
\newblock Crochemore's partitioning on weighted strings and applications.
\newblock {\em Algorithmica}, 80(2):496--514, 2018.
\newblock \href {http://dx.doi.org/10.1007/s00453-016-0266-0}
  {\path{doi:10.1007/s00453-016-0266-0}}.

\bibitem{DBLP:conf/latin/Charalampopoulos18}
Panagiotis Charalampopoulos, Costas~S. Iliopoulos, Chang Liu, and Solon~P.
  Pissis.
\newblock Property suffix array with applications.
\newblock In Michael~A. Bender, Martin Farach{-}Colton, and Miguel~A. Mosteiro,
  editors, {\em 13th Latin American Symposium on Theoretical Informatics,
  {LATIN} 2018}, volume 10807 of {\em LNCS}, pages 290--302. Springer, 2018.
\newblock \href {http://dx.doi.org/10.1007/978-3-319-77404-6\_22}
  {\path{doi:10.1007/978-3-319-77404-6\_22}}.

\bibitem{DBLP:journals/iandc/Charalampopoulos19}
Panagiotis Charalampopoulos, Costas~S. Iliopoulos, Solon~P. Pissis, and Jakub
  Radoszewski.
\newblock On-line weighted pattern matching.
\newblock {\em Information and Computation}, 266:49--59, 2019.
\newblock \href {http://dx.doi.org/10.1016/j.ic.2019.01.001}
  {\path{doi:10.1016/j.ic.2019.01.001}}.

\bibitem{DBLP:books/daglib/0023376}
Thomas~H. Cormen, Charles~E. Leiserson, Ronald~L. Rivest, and Clifford Stein.
\newblock {\em Introduction to Algorithms, 3rd Edition}.
\newblock {MIT} Press, 2009.
\newblock URL:
  \url{https://mitpress.mit.edu/books/introduction-algorithms-third-edition}.

\bibitem{DBLP:journals/dam/CyganKRRW16}
Marek Cygan, Marcin Kubica, Jakub Radoszewski, Wojciech Rytter, and Tomasz
  Waleń.
\newblock Polynomial-time approximation algorithms for weighted {LCS} problem.
\newblock {\em Discrete Applied Mathematics}, 204:38--48, 2016.
\newblock \href {http://dx.doi.org/10.1016/j.dam.2015.11.011}
  {\path{doi:10.1016/j.dam.2015.11.011}}.

\bibitem{DBLP:journals/jacm/HorowitzS74}
Ellis Horowitz and Sartaj Sahni.
\newblock Computing partitions with applications to the knapsack problem.
\newblock {\em Journal of the {ACM}}, 21(2):277--292, 1974.
\newblock \href {http://dx.doi.org/10.1145/321812.321823}
  {\path{doi:10.1145/321812.321823}}.

\bibitem{DBLP:journals/jcss/ImpagliazzoP01}
Russell Impagliazzo and Ramamohan Paturi.
\newblock On the complexity of $k$-{SAT}.
\newblock {\em Journal of Computer and System Sciences}, 62(2):367--375, 2001.
\newblock \href {http://dx.doi.org/10.1006/jcss.2000.1727}
  {\path{doi:10.1006/jcss.2000.1727}}.

\bibitem{DBLP:journals/jcss/ImpagliazzoPZ01}
Russell Impagliazzo, Ramamohan Paturi, and Francis Zane.
\newblock Which problems have strongly exponential complexity?
\newblock {\em Journal of Computer and System Sciences}, 63(4):512--530, 2001.
\newblock \href {http://dx.doi.org/10.1006/jcss.2001.1774}
  {\path{doi:10.1006/jcss.2001.1774}}.

\bibitem{DBLP:journals/siamcomp/JiangL95}
Tao Jiang and Ming Li.
\newblock On the approximation of shortest common supersequences and longest
  common subsequences.
\newblock {\em {SIAM} Journal on Computing}, 24(5):1122--1139, 1995.
\newblock \href {http://dx.doi.org/10.1137/S009753979223842X}
  {\path{doi:10.1137/S009753979223842X}}.

\bibitem{DBLP:conf/coco/Karp72}
Richard~M. Karp.
\newblock Reducibility among combinatorial problems.
\newblock In Raymond~E. Miller and James~W. Thatcher, editors, {\em Symposium
  on the Complexity of Computer Computations}, The {IBM} Research Symposia
  Series, pages 85--103. Plenum Press, New York, 1972.
\newblock \href {http://dx.doi.org/10.1007/978-1-4684-2001-2\_9}
  {\path{doi:10.1007/978-1-4684-2001-2\_9}}.

\bibitem{DBLP:journals/corr/abs-1901-04068}
Evangelos Kipouridis and Kostas Tsichlas.
\newblock Longest common subsequence on weighted sequences, 2019.
\newblock URL: \url{http://arxiv.org/abs/1901.04068}, \href
  {http://arxiv.org/abs/1901.04068} {\path{arXiv:1901.04068}}.

\bibitem{DBLP:journals/mst/KociumakaPR19}
Tomasz Kociumaka, Solon~P. Pissis, and Jakub Radoszewski.
\newblock Pattern matching and consensus problems on weighted sequences and
  profiles.
\newblock {\em Theory of Computing Systems}, 63(3):506--542, 2019.
\newblock \href {http://dx.doi.org/10.1007/s00224-018-9881-2}
  {\path{doi:10.1007/s00224-018-9881-2}}.

\bibitem{DBLP:journals/eatcs/LokshtanovMS11}
Daniel Lokshtanov, D{\'{a}}niel Marx, and Saket Saurabh.
\newblock Lower bounds based on the {E}xponential {T}ime {H}ypothesis.
\newblock {\em Bulletin of the {EATCS}}, 105:41--72, 2011.
\newblock URL: \url{http://eatcs.org/beatcs/index.php/beatcs/article/view/92}.

\bibitem{DBLP:journals/jacm/Maier78}
David Maier.
\newblock The complexity of some problems on subsequences and supersequences.
\newblock {\em Journal of the {ACM}}, 25(2):322--336, 1978.
\newblock \href {http://dx.doi.org/10.1145/322063.322075}
  {\path{doi:10.1145/322063.322075}}.

\bibitem{DBLP:conf/dcc/RadoszewskiS17}
Jakub Radoszewski and Tatiana Starikovskaya.
\newblock Streaming $k$-mismatch with error correcting and applications.
\newblock In Ali Bilgin, Michael~W. Marcellin, Joan Serra{-}Sagrist{\`{a}}, and
  James~A. Storer, editors, {\em Data Compression Conference, {DCC} 2017},
  pages 290--299. {IEEE}, 2017.
\newblock \href {http://dx.doi.org/10.1109/DCC.2017.14}
  {\path{doi:10.1109/DCC.2017.14}}.

\bibitem{DBLP:journals/tcs/RaihaU81}
Kari{-}Jouko R{\"{a}}ih{\"{a}} and Esko Ukkonen.
\newblock The shortest common supersequence problem over binary alphabet is
  {NP}-complete.
\newblock {\em Theoretical Computer Science}, 16:187--198, 1981.
\newblock \href {http://dx.doi.org/10.1016/0304-3975(81)90075-X}
  {\path{doi:10.1016/0304-3975(81)90075-X}}.

\bibitem{Stormo_1982}
Gary~D. Stormo, Thomas~D. Schneider, Larry Gold, and Andrzej Ehrenfeucht.
\newblock Use of the `perceptron’ algorithm to distinguish translational
  initiation sites in {E.}~coli.
\newblock {\em Nucleic Acids Research}, 10(9):2997--3011, 1982.
\newblock \href {http://dx.doi.org/10.1093/nar/10.9.2997}
  {\path{doi:10.1093/nar/10.9.2997}}.

\end{thebibliography}

\end{document}